\definecolor{lime}{HTML}{A6CE39}
\DeclareRobustCommand{\orcidicon}{
\begin{tikzpicture}
\draw[lime, fill=lime] (0,0)
circle[radius=0.16]
node[white]{{\fontfamily{qag}\selectfont \tiny \.{I}D}};
\end{tikzpicture}
\hspace{-2mm}
}
\xdef\csname orcid\x\endcsname{\noexpand\href{https://orcid.org/\csname orcidauthor\x\endcsname}{\noexpand\orcidicon}}
\newcommand{\Rmnum}[1]{\expandafter\@slowromancap\romannumeral #1@}
\begin{document}
\title{FairCMS: Cloud Media Sharing with Fair Copyright Protection }
\author
{\IEEEauthorblockN{Xiangli Xiao}\hspace{-1.5mm}\orcidA{},
        {Yushu Zhang}\hspace{-1.5mm}\orcidB{},
        {Leo Yu Zhang}\hspace{-1.5mm}\orcidC{}, 
        {Zhongyun Hua}\hspace{-1.5mm}\orcidD{},
        {Zhe Liu}\hspace{-1.5mm}\orcidE{},
        and
        {Jiwu Huang}\hspace{-1.5mm}\orcidF{}

\thanks{This work has been published in IEEE Transactions on Computational Social Systems, DOI: 10.1109/TCSS.2024.3374452.}
\thanks{This work was supported in part by the National Key R\&D Program of China under Grant 2021YFB3100400 and in part by the Postgraduate Research \& Practice Innovation Program of Jiangsu Province under Grant KYCX23\_0397. \textit{(Corresponding author: Yushu Zhang and Leo Yu Zhang.)}} 
\thanks{X. Xiao, Y. Zhang, and Z. Liu are with the College of Computer Science and Technology, Nanjing University of Aeronautics and Astronautics, Nanjing 211106, China (e-mail: xiaoxiangli@nuaa.edu.cn; yushu@nuaa.edu.cn; zhe.liu@nuaa.edu.cn). }
\thanks{L. Zhang is with the School of Information and Communication Technology, Griffith University, Southport QLD 4215, Australia (e-mail: leocityu@gmail.com). }
\thanks{Z. Hua is with the School of Computer Science and Technology, Shenzhen Campus of Harbin Institute of Technology, Shenzhen 518055, China (e-mail: huazhongyun@hit.edu.cn).}
\thanks{J. Huang is with the College of Electronics and Information Engineering, Shenzhen University, Shenzhen 518060, China (e-mail: jwhuang@szu.edu.cn).}
}

\maketitle

\begin{abstract}
The onerous media sharing task prompts resource-constrained media owners to seek help from a cloud platform, i.e., storing media contents in the cloud and letting the cloud do the sharing. There are three key security/privacy problems that need to be solved in the cloud media sharing scenario, including data privacy leakage and access control in the cloud, infringement on the owner's copyright, and infringement on the user's rights. In view of the fact that no single technique can solve the above three problems simultaneously, two cloud media sharing schemes are proposed in this paper, named FairCMS-I and FairCMS-II. By cleverly utilizing the proxy re-encryption technique and the asymmetric fingerprinting technique, FairCMS-I and FairCMS-II solve the above three problems with different privacy/efficiency trade-offs. Among them, FairCMS-I focuses more on cloud-side efficiency while FairCMS-II focuses more on the security of the media content, which provides owners with flexibility of choice. In addition, FairCMS-I and FairCMS-II also have advantages over existing cloud media sharing efforts in terms of optional IND-CPA (indistinguishability under chosen-plaintext attack) security and high cloud-side efficiency, as well as exemption from needing a trusted third party. Furthermore, FairCMS-I and FairCMS-II allow owners to reap significant local resource savings and thus can be seen as the privacy-preserving outsourcing of asymmetric fingerprinting. Finally, the feasibility and efficiency of FairCMS-I and FairCMS-II are demonstrated by experiments.
\end{abstract}

\begin{IEEEkeywords}
Cloud media sharing, asymmetric fingerprinting, proxy re-encryption, copyright protection.
\end{IEEEkeywords}

\IEEEpeerreviewmaketitle

\section{Introduction}
{\color{black}In day-to-day life, people are encountering an ever-growing volume of media big data through various social media platforms such as Facebook, Twitter, and WeChat. As a result, it has become increasingly common for media owners to share their contents with multiple users. To handle the vast number of users and media contents, the owner is typically required to deploy high-performance servers and broadband networks to cope with the onerous computing, storage, and communication demands of copyright-protected media sharing solutions \cite{hu2020cover,wang2023data}. While this may not be a challenge for media companies, it can be prohibitively expensive for individuals.}

{\color{black}An intuitive approach to reduce overhead for the owner is to store the media contents in a cloud platform and, with the help of the cloud, share the media contents to the authorized users. It is evolving into an emerging technique called cloud media sharing \cite{zhang2018you,dong2020watermarking}. In this technique,} on the one hand, the owner can make full use of the abundant software, hardware, and bandwidth resources of the cloud, thereby avoiding the expense of deploying servers and networks; on the other hand, the cloud can earn economic benefits by collecting rent from owners. 

{Despite the benefits of cloud media sharing, {\color{black}due to a variety of reasons such as curiosity, financial incentive, and reputation damage, there are still several security/privacy problems that have to be solved,} as shown below.}

\textit{Problem 1:} {Data privacy leakage and access control in the cloud. {\color{black}On the one hand, the cloud service provider could be curious about the data it encounters. On the other hand, it is a challenge to implement access control over the media content without direct control by the owner.}}

\textit{Problem 2:} The infringement on the data owner's copyright. {\color{black}Upon receiving the media content from the cloud, the authorized user could redistribute the owner’s media content arbitrarily. Clearly, it damages the owner’s copyright.} 

\textit{Problem 3:} The infringement on the user's rights. The owner may frame any user by falsely alleging that the user's water-mark has been found in an unauthorized content copy, while the fact is that this user's watermark had been abused by the owner or adversaries.
{The possibility of launching a malicious incrimination on an honest user harms the user’s rights.} 

To solve \textit{Problem 1}, cryptosystems implementing privacy-protected access control are required. {\color{black}Attribute-Based Encryption (ABE) \cite{bobba2010attribute,wei2018fractal} is such a cryptosystem, in which a user’s credentials are described by attributes. A ciphertext can be de-crypted by the user when the attributes pass the ciphertext’s access structure. Proxy Re-encryption (PRE) \cite{xu2012cl,ateniese2006improved} is another common cryptographic access control system, which allows the cloud to convert the owner’s ciphertext into a user’s ci-phertext without knowing the plaintext version.}

We then move on to discuss possible solutions to \textit{Problems 2 and 3}. 
{\color{black}A watermarking technique being able to safeguard the user's rights while maintaining the owner's copyright is called Asymmetric Fingerprinting (AFP) \cite{pfitzmann1996asymmetric, bianchi2014ttp, memon2001buyer, lei2004efficient, rial2010provably, bianchi2013secure}.} AFP mainly relies on cryptographic tools including public key cryptosystem and homomorphic encryption, in which the embedding operation is performed in the ciphertext domain so that only the user has assess to his/her own fingerprint. Since the owner cannot acquire the user's personal fingerprint, he/she has no ability to frame honest users. Nevertheless, if the owner later finds a suspicious media copy, the malicious user can still be identified and proved guilty in front of a judge.

As discussed above, AFP seems to solve \textit{Problems 2 and 3} perfectly. {\color{black}However, this is no longer the case when media contents are remotely hosted by the cloud since existing AFP schemes were designed without taking the cloud's involvement into consideration. Thus it remains to be further explored how to develop a novel AFP solution compatible with the system model of cloud media sharing.} This is a challenging task, not even to mention that we also have to simultaneously protect the privacy of the data in the cloud, i.e., solving \textit{Problem 1}. 


There are two extra challenges that need to be addressed. {\color{black}For one thing, considering that the original purpose of cloud's involvement is to help resource-constrained owners efficiently share their media contents, the owner-side overhead needs to be carefully controlled to ensure that owners can obtain sig-nificant resource savings.} For another, since the number and identity composition of target users generally cannot be determined accurately in advance, {\color{black}a practical cloud media sharing scheme should be scalable in the sense its capacity expands in real time as the number of subscriptions increases.} PRE is better than ABE if the access policy changes frequently \cite{zhang2018you, qin2016survey}. {\color{black}Therefore, we tend to use PRE for scheme construction in this paper.}

In this paper, facing these problems and challenges, we set out to solve them. First, to achieve data protection and access control, we adopt the lifted-ElGamal based PRE scheme, as discussed in \cite{yu2019file,samanthula2015secure,gao2019cloud,shafagh2017secure,derler2017homomorphic}, whose most prominent characteristic is that it satisfies the property of additive homomorphism. {\color{black}Then this homomorphism property is fully exploited to facilitate the integration with the Look-Up Table (LUT) based AFP scheme put forward by Bianchi \textit{et al.} \cite{bianchi2014ttp}.} In this way, the cloud is successfully introduced to participate in the AFP solution, and the combination of the two technologies provides an approach to solve \textit{Problems 1, 2, and 3} simultaneously. 

{\color{black}According to the above idea, we propose two cloud media sharing schemes in this paper, i.e., FairCMS-I and FairCMS-II, which solve the above three problems with different privacy/efficiency trade-offs. Among them, FairCMS-I consumes fewer cloud resources, while FairCMS-II achieves better protection for the media content.} The different encryption methods used by the two schemes for the media content stored in the cloud are responsible for this performance difference.

{\color{black}From the point of view of the owner side, both FairCMS-I and FairCMS-II can be regard as the outsourcing of the LUT-based AFP \cite{bianchi2014ttp}, i.e., the owner can reap significant savings in the local storage, communication, and computing resources.} Therefore, both schemes meet the requirement of owner-side efficiency, which is validated by the comprehensive theoretical analysis in Section~\ref{Sec:EfficiencyAnalysis} and the experimental evaluation in Section~\ref{sec:simulation}. Meanwhile, the two schemes meet the scalability requirement by performing re-encryption operations for each user. In addition, experiments are also carried out to demonstrate the excellent performance of the proposed two schemes in terms of perception quality and success tracing rate.

Finally, the major contributions of this paper are summarized as follows:
\begin{itemize}








\item {\color{black}Aiming at the situation that the existing techniques can-not fully meet the security/privacy requirements of cloud media sharing, we propose two novel schemes, namely FairCMS-I and FairCMS-II, to solve \textit{Problems 1, 2, and 3} with different privacy/efficiency trade-offs, which are also qualified in terms of owner-side efficiency and scalability. Compared to existing schemes, FairCMS-I and FairCMS-II possess the advantages outlined in Table \ref{tab:comparison}.}

\item {\color{black}By delegating the management of the media content to the cloud, FairCMS-I and FairCMS-II can also be seen as an instantiation of privacy-preserving outsourcing of AFP, thereby solving the problem caused by insufficient local resources of the owner in media sharing.}


\end{itemize}

The rest of this paper is outlined below. {\color{black}The next section reviews the related work. Section \ref{sec:problemstatement} describes the system model, threat model, and design goals. Subsequently, Section \ref{sec:fundamental} introduces the involved fundamental techniques. The two schemes are constructed in Section \ref{Scheme_Construction}.} The performance of the two schemes regarding the three problems is evaluated in Section \ref{Achieving} followed by the efficiency analysis in Section \ref{Sec:EfficiencyAnalysis}. The experimental results are reported in Section \ref{sec:simulation}. The last section concludes {the} paper. 

\section{Related Work}
\label{sec:relatedwork}
Firstly, this work inherits from the privacy-protected cloud media sharing solutions based on ABE or PRE. Wu \textit{et al.} \cite{wu2013attribute} came up with an ABE scheme with multi-message ciphertext policy, {\color{black}which was implemented for scalable media sharing and access control based on the user's attributes.} 
{\color{black}Polyakov \textit{et al.} \cite{polyakov2017fast} proposed two multihop unidirectional PRE schemes for controlled publication and subscription of cloud data,} which supports the transfer of proxy access rights. 
Liang \textit{et al.} \cite{liang2014dfa} defined and constructed a deterministic finite automata-based functional PRE scheme for public cloud data sharing without privacy leakage. 
{\color{black}After that, Li \textit{et al.} \cite{li2018multi} constructed a fine-grained and accountable access control system in the cloud, which traces suspicious access behaviors while ignores redistribution behaviors.}
Clearly, \textit{Problems 2 and 3} are not considered in these works.

Secondly, this work is related to the AFP schemes.
Rial \textit{et al.} \cite{rial2010provably} proposed a provably secure anonymous AFP scheme based on the ideal-world/real-world paradigm. 
{\color{black}Poh \textit{et al.} \cite{poh2008efficient} designed an innovative user-side AFP scheme based on the symmetric Chameleon encryption technique, which achieves significant gains in owner-side computing and communication efficiency.}
{\color{black}Afterwards, Bianchi \textit{et al.} \cite{bianchi2014ttp} proposed a LUT-based AFP scheme without involving a Trusted Third Party (TTP) based on homomorphic encryption, which also implements AFP within the user-side framework.}
Despite the fact that \textit{Problems 2 and 3} are solved in these works, \textit{Problem 1} is not mentioned.


{\color{black}Thirdly, there are also studies that deal with both privacy-protected access control and traitor tracing. Xia \textit{et al.} \cite{xia2016privacy} introduced the watermarking technique to privacy-protected content-based ciphertext image retrieval in the cloud, which can prevent the user from illegally distributing the retrieved images. However, the fairness of the traitor tracing is only realized by the involvement of a TTP in the scheme. Moreover, the encryption of image features in the scheme is not IND-CPA secure. Zheng \textit{et al.} \cite{zheng2022towards} aimed to achieve differential access control and access history hiding on the cloud while enabling fair redistribution tracing by embedding watermarks homomorphically. However, the computing overhead on the cloud side would be onerous due to the need of performing re-encryption operations and homomorphic operations on the media content. Additionally, a TTP is still required to generate and encrypt watermarks for every user. Frattolillo \textit{et al.} \cite{frattolillo2019multiparty} proposed a multi-party watermarking scheme for the cloud's environment, which is able to solve the aforementioned three problems simultaneously. However, IND-CPA security is not satisfied in the scheme due to the adoption of commutative cryptosystem. Zhang \textit{et al.} \cite{zhang2018you} combined PRE and fair watermarking to realize privacy-protected access control and combat content redistribution in the cloud, which also solves all three problems successfully.} For one thing, compared with the first scheme of Zhang \textit{et al.}, neither of our schemes requires the participation of a TTP. For another, compared with the second scheme of Zhang \textit{et al.}, which does not require a TTP, in our proposed scheme FairCMS-I, the cloud only needs to perform homomorphic operations and re-encryption operations on the encrypted LUT and fingerprint instead of the encrypted media content. {As LUTs and fingerprints are far shorter than the media content itself, FairCMS-I consumes much fewer cloud resources than that of \cite{zhang2018you} (the cloud-side overhead of the two schemes in \cite{zhang2018you} is the same). {\color{black}Furthermore, in the second scheme of Zhang \textit{et al.}, the user can escape traceability by generating two different fingerprints (we discuss this in detail in the third last paragraph of Section \ref{sec:FairCMS-I}), and both FairCMS-I and FairCMS-II solve this problem.}

\begin{table*}[t]
\centering
\caption{{\color{black}Comparison with related existing schemes}}
\label{tab:comparison}
\begin{threeparttable}
\begin{tabular}{m{1.8cm}<{\centering}m{1.6cm}<{\centering}m{1.6cm}<{\centering}m{1.6cm}<{\centering}m{1.6cm}<{\centering}m{1.6cm}<{\centering}m{2.3cm}<{\centering}m{2.3cm}<{\centering}}
\toprule
  \multirow{3}{*}{{\color{black}Schemes}} & \multirow{3}{*}{\makecell{{\color{black}Cloud’s} \\ {\color{black}involvement}}} & \multicolumn{2}{c}{\textit{{\color{black}Problem 1}}} &  \multirow{3}{*}{\makecell{{\color{black}Cloud-side} \\ {\color{black}efficiency}}} & \multirow{3}{*}{{\color{black}TTP-free}} & \multicolumn{2}{c}{\textit{{\color{black}Problems 2 and 3}}} \\
  \cline{3-4}\cline{7-8}
   & & \makecell{{\color{black}Privacy} \\ {\color{black}protection}} & \makecell{{\color{black}Access} \\ {\color{black}control}} & & & \makecell{{\color{black}Protect the} \\ {\color{black}owner's copyright}} & \makecell{{\color{black}Protect the user's} \\ {\color{black}rights}}   \\
 \midrule
 {\color{black}\cite{wu2013attribute,polyakov2017fast,liang2014dfa,li2018multi}} & {\color{black}\checkmark} & {\color{black}\checkmark} & {\color{black}\checkmark} & {\color{black}\checkmark} & {\color{black}\checkmark} & {\color{black}$\times$} & {\color{black}$\times$}\\ 
 {\color{black}\cite{poh2008efficient}} & {\color{black}$\times$} & {\color{black}$-$\tnote{*}} & {\color{black}$-$\tnote{*}} & {\color{black}$-$\tnote{*}} & {\color{black}$\times$}  & {\color{black}\checkmark} & {\color{black}\checkmark}\\ 
   {\color{black}\cite{rial2010provably,bianchi2014ttp}} & {\color{black}$\times$} & {\color{black}$-$\tnote{*}} & {\color{black}$-$\tnote{*}} & {\color{black}$-$\tnote{*}} & {\color{black}\checkmark} & {\color{black}\checkmark} & {\color{black}\checkmark} \\
   {\color{black}\cite{xia2016privacy}} & {\color{black}\checkmark} & {\color{black}$\checkmark\mkern-11mu{\smallsetminus}$\tnote{*}} & {\color{black}\checkmark} & {\color{black}\checkmark} & {\color{black}$\times$} & {\color{black}\checkmark} & {\color{black}\checkmark} \\
    {\color{black}\cite{zheng2022towards}} & {\color{black}\checkmark} & {\color{black}\checkmark} & {\color{black}\checkmark} & {\color{black}$\times$} & {\color{black}$\times$} & {\color{black}\checkmark} & {\color{black}\checkmark} \\
  {\color{black}\cite{frattolillo2019multiparty}} & {\color{black}\checkmark} & {\color{black}$\checkmark\mkern-11mu{\smallsetminus}$\tnote{*}} & {\color{black}\checkmark} & {\color{black}$\times$}  & {\color{black}\checkmark} & {\color{black}\checkmark} &  {\color{black}\checkmark}\\
 {\color{black}\cite{zhang2018you}-I\tnote{*}} & {\color{black}\checkmark} & {\color{black}\checkmark} & {\color{black}\checkmark}  & {\color{black}$\times$} & {\color{black}$\times$} & {\color{black}\checkmark} & {\color{black}\checkmark}\\
 {\color{black}\cite{zhang2018you}-II\tnote{*}} & {\color{black}\checkmark} & {\color{black}\checkmark} & {\color{black}\checkmark}  & {\color{black}$\times$} & {\color{black}\checkmark} & {\color{black}\checkmark} & {\color{black}\checkmark}\\
  {\color{black}FairCMS-I} & {\color{black}\checkmark} & {\color{black}$\checkmark\mkern-11mu{\smallsetminus}$\tnote{*}} & {\color{black}\checkmark} & {\color{black}\checkmark} & {\color{black}\checkmark} & {\color{black}\checkmark} & {\color{black}\checkmark}\\
  {\color{black}FairCMS-II} & {\color{black}\checkmark} & {\color{black}\checkmark} & {\color{black}\checkmark} & {\color{black}$\times$} & {\color{black}\checkmark} & {\color{black}\checkmark} & {\color{black}\checkmark}\\
 \bottomrule
\end{tabular}
\begin{tablenotes}
\footnotesize
{\color{black}\item[*] `$-$' indicates that the property is not scored because the involvement of cloud is not considered. $\checkmark\mkern-11mu{\smallsetminus}$ means that the privacy of cloud media is protected, but that protection is not IND-CPA secure. \cite{zhang2018you}-I and \cite{zhang2018you}-II represent the first scheme and the second scheme in \cite{zhang2018you}, respectively. Compared with \cite{zhang2018you}-II, the main advantage of FairCMS-II is that it solves the problem that users can escape traceability by generating two different fingerprints, as discussed in the third last paragraph of Section \ref{sec:FairCMS-I}.}
\end{tablenotes}
\end{threeparttable}
\vspace{-2pt}
\end{table*}

{\color{black}Finally, the comparison between the two proposed schemes and the existing relevant schemes is summarized in Table \ref{tab:comparison}. As can be seen therein, the two proposed schemes FairCMS-I and FairCMS-II have advantages over the existing works. In addition, the two proposed schemes offer owners the flexibility to choose. If the security requirements for the media content are not excessively rigorous and the size of the media content is small (e.g., images with a moderate pixel count), the owner can choose FairCMS-I to minimize the cost of renting cloud resources; otherwise, the owner can choose FairCMS-II. There is no fixed security requirement or content size threshold to guide the selection between these two options. Instead, it is up to the owner to make a decision based on the objective application scenario and his/her subjective considerations. In Section \ref{sec:simulation}, we conduct a comparative experiment on the cloud-side efficiency of FairCMS-I and FairCMS-II to provide a quantitative reference for the owner's decision-making.}

\section{Problem Statement}
\label{sec:problemstatement}
\subsection{\color{black}System Model}}
As illustrated in Fig.~\ref{fig:sysmodel}, our system model mainly consists of four entities: the owner, the cloud, users, and the judge. Their roles are elaborated as follows.

\begin{figure}[ht]
\centering
\includegraphics[scale=0.06]{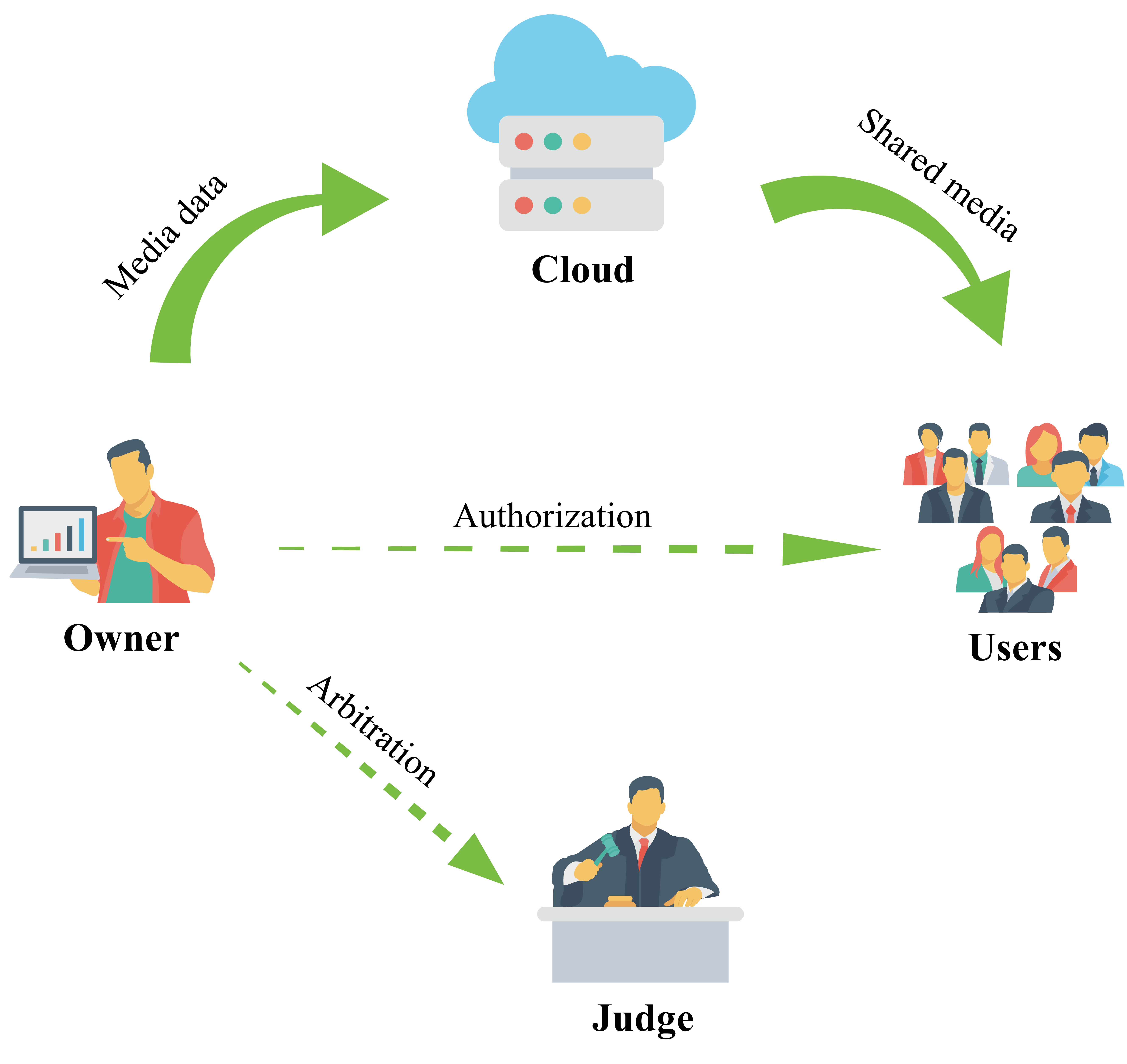}
\caption{{System Model.}}
\label{fig:sysmodel}
\vspace{-2pt}
\end{figure}

\begin{itemize}
\item \textbf{Owner}. The owner, who might be a media producer or trader, {\color{black}plans to store the media content in the cloud and let the cloud do the content distribution. 
First, the owner requires that the cloud not be able to obtain the plaintext about the media content and the LUTs, and that access to the media content is controlled by his/her authorization.} 
Second, the owner asks for significant overhead savings from cloud media sharing. Third, the owner demands traitor tracing of users who violate copyright. 
{\color{black}If a suspicious redistributed copy of the media content is found,} the owner can file an arbitration claim with the judge.

\item \textbf{Cloud}. The cloud has a wealth of hardware, software, and bandwidth resources, part of which are lent to the owner through a pay-per-use way to help the owner carry out media sharing. 

\item \textbf{Users}. {\color{black}Users want to access the owner's media content. To this end, users request authorization from the owner, for example by paying for purchases. If successful, users can get the desired shared media content from the cloud. Users require that the plaintext of their fingerprints not be accessed by the owner or the cloud, to prevent malicious framing by the owner.}

\item \textbf{Judge}. {\color{black}The judge is a trusted entity who is only responsible for arbitration in the case of illegal redistribution, as in existing traitor tracing systems \cite{bianchi2014ttp, memon2001buyer, lei2004efficient, rial2010provably, bianchi2013secure,zhang2018you}. After receiving the owner's request for arbitration, the judge makes a fair judgment based on the evidence provided by the owner. Although only the encrypted version of the user's watermark is disclosed, the encrypted watermark can be converted into a ciphertext that can be decrypted by the judge based on PRE (for details, please see Figs. \ref{fig:3} and \ref{fig:4}), thus enabling traitor tracing.} Once the judge detects a copyright infringement, the unfaithful user will be prosecuted in accordance with the law.
\end{itemize}

\subsection{{\color{black}Threat Model}}
{\color{black}The threats considered in this paper come from three entities: users, the owner, and the cloud. First, users are assumed to be malicious, who could illegally redistribute the owner's media content with the hope that this behavior will not be detected. Second, the owner is also assumed to be malicious, who may try to obtain the users' fingerprints and maliciously embed the fingerprints into any media content to frame honest users for copyright infringement. Third, the cloud is assumed to be honest-but-curious, which is the same as other privacy-preserving cloud media sharing schemes based on ABE or PRE \cite{xu2012cl,wu2013attribute,liang2014dfa,seo2014efficient,qin2016survey}. Although the honest-but-curious cloud faithfully performs its assigned duties, he/she could try to steal the plaintext about the owner's media content. Moreover, the cloud is also curious about other information it encounters, including the users' fingerprints and the LUTs. Finally, as in \cite{frattolillo2019multiparty}, we assume that there may exist collusion among individual users and collusion between the owner and the cloud, while there is no collusion between users and the cloud.}


\subsection{Design Goals}
\label{subsec:designgoals}
Under the above system model and threat model, we summarize the design goals as follows.

\begin{itemize}
\item \textbf{{Implement privacy-preserving access control}}. {On the one hand, the cloud should be prevented from obtaining the private plaintext of the data it encounters, {\color{black}including the owner's media content, the users' fingerprints, and the LUTs. On the other hand, only users authorized by the owner can access the media content.}}

\item \textbf{Protect the owner's copyright}. {\color{black}We need to embed the user's fingerprint in the owner's media content to enable traitor tracing. As long as an unfaithful user makes an unauthorized redistribution, he/she can be detected by the embedded fingerprint in the media content.}

\item \textbf{Protect the user’s {right}}. {\color{black}We need to prevent the owner from framing an honest user for copyright infringement by embedding this user's fingerprint in any media content.}

\item \textbf{{\color{black}Ensure efficiency gains and scalability}}. {\color{black}For one thing, we need to carefully control the owner-side overhead to ensure that the owner can gain significant local resource savings from cloud media sharing. For another, we need to ensure that the two proposed schemes are scalable to handle real-time requests from users.}

\item \textbf{{\color{black}TTP-free}}. {\color{black}The two proposed schemes should not require any TTP to participate in the media sharing process. The TTP mentioned here does not cover the judge, who is only responsible for handing down sentences in cases of suspected redistribution and is not involved in the media sharing process. The judge is an indispensable participant. In contrast, the existing schemes \cite{xia2016privacy,zheng2022towards,zhang2018you} that are not TTP-free all require a TTP to remain online during media sharing to generate watermarks for users.}
\end{itemize}

\section{Fundamental Techniques}
\label{sec:fundamental}

{\subsection{LUT-Based Secure Embedding}
For better understanding, we start with the introduction of a LUT-based watermarking scheme \cite{celik2008lookup, celik2007secure}, which is the basis of the AFP scheme proposed by Bianchi \textit{et al.} \cite{bianchi2014ttp}.

\subsubsection{Key Generation}
First, the owner creates a $T \times 1$ encryption LUT (E-LUT) $\mathbf{E}$ with its entry ${\mathbf{E}}( t )~(0\leq t \leq T-1)$ following the Gaussian distribution $N( {0,{\sigma _E}} )$. Then, for the $k$-th user, his/her personalized decryption LUT (D-LUT) $\mathbf{D}_k$ is obtained by modifying the E-LUT as
\begin{equation*}
\mathbf{D}_k =  - \mathbf{E} + \mathbf{W}_k,
\end{equation*}
where ${\mathbf{W}_k}$ is the $k$-th user's personalized watermarking LUT (W-LUT) that is generated by
\begin{equation*}
{\mathbf{W}}_k = \mathbf{G}{\mathbf{w}}_k, 
\end{equation*}
where ${\mathbf{G}}$ is a $T \times L$ encoding matrix \cite{bianchi2014ttp,Marshall2003Coding,wang2003complex} and the vector ${\mathbf{w}}_k = [w_{k,l}]_{l=0}^{L-1}$ is {\color{black}computed by ${w_{k,l}} = {\sigma _W}( {2{b_{k,l}} - 1} )$. 
Here, ${\sigma _W}$ is a public parameter utilized to set the standard deviation of ${{\mathbf{W}}_k}$} and ${\mathbf{b}_k = [b_{k,l}]_{l=0}^{L-1}}$ is the $k$-th user's $L$-bit uniform binary fingerprint.}

\subsubsection{Encryption}
Based on the E-LUT $\mathbf{E}$, {\color{black}the owner encrypts the media content using the single-value alteration method.} 
{Assume that the media content can be represented as a vector $\mathbf{m}$ of length $M$, the encryption method used here can be expressed as}
\begin{equation}
\mathbf{c} = \mathbf{m} + \mathbf{B}^m \mathbf{E},
\label{Eq:encryption}
\end{equation}
where $\mathbf{B}^m$ is a $M \times T$ binary matrix defined as
\begin{equation*} 
\mathbf{B}^m( i,j ) = 
\left\{ 
\begin{aligned} 
&1,~~~j = {t_{ih}}, \\
&0,~~~\text{otherwise},
\end{aligned}
\right.
\end{equation*}
where $t_{ih}~(0 \leq i \leq M-1, 0 \leq h \leq S-1)$ is a set of $M \times S$ values in the range $[0,T-1]$. The generation of $t_{ih}$ is controlled by an index generator under a session key $SK_m$. 


\subsubsection{Joint Decryption and Fingerprinting}
Upon receiving the $k$-th user’s request, the owner sends $\mathbf{c}$ to the $k$-th user who performs joint decryption and fingerprinting with the D-LUT ${\mathbf{D}}_k$ to obtain his/her personalized fingerprinted media content $\mathbf{m}^k$ by performing
\begin{equation} 
\mathbf{m}^k = \mathbf{c} +  \mathbf{B}^m \mathbf{D}_k.
\label{Eq:decryption}
\end{equation}
Note that Eq.~(\ref{Eq:decryption}) can be calculated as
\begin{IEEEeqnarray}{rCL}
\mathbf{m}^k &=& {\mathbf{c}} + {{\mathbf{B}}^m}{{\mathbf{D}}_k} \nonumber\\
&=& {\mathbf{m}} + {{\mathbf{B}}^m}{\mathbf{E}}{\rm{ + }}{{\mathbf{B}}^m}{{\mathbf{D}}_k} \nonumber\\
&=& {\mathbf{m}} + {\sigma _W}{\bar{\mathbf G}}{({2{{\mathbf{b}}_k}}-1)},
\label{Eq:decandfinger}
\end{IEEEeqnarray}
where ${\bar{\mathbf G}} = {{\mathbf{B}}^m}{\mathbf{G}}$. It is clear from Eq.~(\ref{Eq:decandfinger}) that the fingerprint $\mathbf{b}_k$ has been successfully embedded into the original media content $\mathbf{m}$ under the modulation of the secret matrix $\bar{\mathbf G}$. 

\subsubsection{Fingerprint Detection and Traitor Tracing}
Once a copyright dispute occurs between the owner and the user, {\color{black}they delegate a judge that is credible for both parties to make a fair arbitration.} Due to the possible noise effect during data transmission, {\color{black}the received suspicious media content copy is assumed to be contaminated by the an additive noise $\mathbf{n}$,} i.e.,
\begin{equation*}
\left. 
\begin{aligned}
\tilde{\mathbf m}^k &= {\mathbf{m}} + {\bar{\mathbf G}}{\tilde{{\mathbf{w}}}}_k + {\mathbf{n}}\\
&={\mathbf{m}} + {\sigma _W}{\bar{\mathbf G}}{({2{\tilde{\mathbf{b}}_k}}-1)} + {\mathbf{n}},
\end{aligned}
\right.
\end{equation*}
where ${\tilde{\mathbf{b}}_k}$ is the suspicious user's fingerprint.

In order to detect the fingerprint ${\tilde{\mathbf{b}}_k}$, the judge can leverage the suboptimal decoders such as the Matched Filter decoder and the Pseudo-Inverse decoder \cite{bianchi2014ttp}, which are respectively formulated as
\begin{IEEEeqnarray}{rCL}
 \tilde{\mathbf b}_k = {\mathop{\rm sgn}} \left\{ {{{{\bar{\mathbf G}}}^T}( {{{{\tilde{\mathbf m}}}^k} - {\mathbf{m}}} )} \right\},
 \label{Eq:mfdecoder}
\end{IEEEeqnarray}
and 
\begin{IEEEeqnarray}{rCL}
\tilde{\mathbf b}_k = {\mathop{\rm sgn}} \left\{ {( {{{{\bar{\mathbf G}}}^T}{\bar{\mathbf G}}} )^{-1}{{{\bar{\mathbf G}}}^T}( {{{{\tilde{\mathbf m}}}^k} - {\mathbf{m}}} )} \right\},
 \label{Eq:pidecoder}
\end{IEEEeqnarray}
where 
\begin{equation*}
{\mathop{\rm sgn}} \left\{ \zeta  \right\} = \left\{ \begin{aligned} 1,~~~ \zeta  > 0,\\
0,~~~ \zeta  \le 0.
\end{aligned} \right. 
\end{equation*}
If ${{\tilde{\mathbf{b}}}_k}={{\mathbf{b}}_k}$ is established within the allowable error range, the judge rule that the $k$-th user has illegally redistributed the owner's media; otherwise, the $k$-th user is proved innocent.


{\subsection{AFP Based on User-Side Embedding}
\label{subsec:afp}
Given the above arithmetics of the LUT-based user-side embedding in the plaintext domain, the additive homomorphism is used to implement the secure distribution of D-LUTs in the ciphertext domain \cite{bianchi2014ttp}.}


\subsubsection{Additive {Homomorphism}}
Suppose that $E_{PK}(\cdot)$ is the additive homomorphic encryption under the public key $PK$. By 
saying additive {homomorphism}, $E_{PK}(\cdot)$ satisfies the following equations:
\begin{equation} 
 \begin{aligned}
&E_{PK}(m_1+m_2)=E_{PK}(m_1) \cdot E_{PK}(m_2),\\
&E_{PK}(a \cdot m)=E_{PK}(m)^a,
\end{aligned}
\label{Eq:addhom}
\end{equation}
{\color{black}where $m_1$, $m_2$, and $m$ are plaintexts, and $a$ is an integer.}

\subsubsection{Secure Distribution of D-LUTs}
First, the $k$-th user generates his/her fingerprint $\mathbf{b}_k$ locally and encrypts it with his/her public key $PK_{U_k}$, i.e., $E_{PK_{U_k}}( \mathbf{b}_k ) = [ E_{PK_{U_k}}( b_{k,0} ),$ $ E_{PK_{U_k}}( b_{k,1} ),\cdots, E_{PK_{U_k}}( b_{k,L-1} ) ]$. 
Then, the $k$-th user sends $E_{PK_{U_k}}( {{\mathbf{b}}_k} )$ to the owner. After receiving $E_{PK_{U_k}}( {{\mathbf{b}}_k} )$, the owner calculates $\mathbf{w}_k$ in the ciphertext domain by
\begin{IEEEeqnarray}{rCL}
E_{PK_{U_k}}( w_{k,l} ) &=& E_{PK_{U_k}}( 2b_{k,l} - 1 )^{\sigma_W} \nonumber \\
    &=& E_{PK_{U_k}}{( b_{k,l} )^{2 \sigma _W}} \cdot E_{PK_{U_k}}{( 1 )^{-{\sigma _W}}}, 
\label{Eq:enccalwkl}
\end{IEEEeqnarray}
and further calculates
\begin{IEEEeqnarray}{rCL}
&& E_{PK_{U_k}}( \mathbf{D}_k ( t ) )   \nonumber \\
 && =  E_{PK_{U_k}}{( {\mathbf{E}( t )} )^{ - 1}} \cdot \prod\limits_{l = 0}^{L-1} {E_{PK_{U_k}}{{( {w_{k,l}} )}^{{\mathbf{G}}( {t,l} )}}} ,
\label{Eq:enccaldkl}
\end{IEEEeqnarray}
where $0\le t \le T-1$. $E_{PK_{U_k}}( \mathbf{D}_k(t) )$ is then sent to the $k$-th user, who can acquire his/her personalized D-LUT through decryption with his/her private key $SK_{U_k}$.

{\color{black}Through the above steps, the owner and the $k$-th user are prevented from knowing the plaintext of $\mathbf{b}_k$ and the plaintext of $\mathbf{G}$ (and $\bar{\mathbf{G}}$) respectively,} i.e., neither of them know the value of the embedded watermark $\bar{\mathbf{G}}\mathbf{w}_k$, thus achieving simultaneous protection of the owner's copyright and the users' rights.

\subsection{Additive Homomorphic PRE}
\label{subsec:pre}

The lifted-ElGamal based PRE scheme \cite{yu2019file,samanthula2015secure,gao2019cloud,shafagh2017secure,derler2017homomorphic} is summarized as follows.

Let $G_1$, $G_2$ be two multiplicative cyclic groups of prime order $q$ with a bilinear map $e:~{G_1} \times {G_1} \to {G_2}$, and $g$ be a random generator of $G_1$. The mapping $e$ has three properties. 1) Bilinearity: for any $a,b\ {\in}\ \mathbb{Z}_q$, $e(g^a,g^b)=e(g,g)^{ab}$. 2) Non-degeneracy: $e(g,g)\ {\neq}\ 1$. 3) Computability: $e$ can be efficiently computed.

With bilinear pairing, the lifted-ElGamal based PRE technique is comprised of the following algorithms by using the public parameters $G_1,G_2,q,e,g$ and $Z=e(g,g)\ {\in}\ G_2$.
 
\begin{itemize}
\item \textbf{Key Generation ($KG$)}. A user $A$’s key pair is of the form $PK_A=(Z^{a_1},g^{a_2})$ and $SK_A=(a_1,a_2)$, and a user $B$'s key pair is of the form $PK_B=(Z^{b_1},g^{b_2})$ and $SK_B=(b_1,b_2)$.

\item \textbf{Re-encryption Key Generation ($RG$)}. A user $A$ {authorizes} $B$ access to $A$'s data by publishing the re-encryption key $rk_{A{\rightarrow}B}=(g^{b_2})^{a_1}=g^{a_1b_2}\ {\in}\ G_1$, computed from $A$’s private key and $B$’s public key\footnote{{\color{black}One single user $A$ can also generate a re-encryption key for himself to change the second-level ciphertext into the first-level,} we denote this key as $rk_{A^2{\rightarrow}A^1}=(g^{a_2})^{a_1}=g^{a_1a_2}$.}.

\item \textbf{First-Level Encryption ($E^1$)}. To encrypt a message $m \in Z_q$ under $PK_A$ in such a way that it can only be decrypted by the holder of $SK_A$, the scheme computes $E^1_{PK_A}(m)=(Z^{a_1r},Z^mZ^r)$, where $r$ is a random num-ber from $Z_q$.

\item \textbf{Second-Level Encryption ($E^2$)}. To encrypt a message $m \in Z_q$ under $PK_A$ in such a way that it can be decrypted by $A$ and his/her delegatees, the scheme outputs $E^2_{PK_A}(m)=(g^r,Z^mZ^{a_1r})$.

\item \textbf{Re-encryption ($R$)}. Anyone can change a second-level ciphertext for $A$ into a first-level ciphertext for $B$ with $rk_{A{\rightarrow}B}=g^{a_1b_2}$. From $E^2_{PK_A}(m)=(g^r,Z^mZ^{a_1r})$, the scheme computes $e(g^r,g^{a_1b_2})=Z^{b_2a_1r}$ and publishes $E^1_{PK_B}(m)=(Z^{b_2a_1r},Z^mZ^{a_1r})=(Z^{b_2{r'}},Z^mZ^{r'})$.

\item \textbf{Decryption ($D^1,D^2$)}. To decrypt a first-level ciphertext $E^1_{PK_A}(m)=(Z^{a_ir},Z^mZ^r)=(\alpha,\beta)$ with private key $a_i\ {\in}\ SK_A$, the scheme computes $m'=Z^m=\frac{\beta}{\alpha^{1/a_i}}$ for $i\ {\in}\ \{1,2\}$. To decrypt a second-level ciphertext $E^2_{PK_A}(m)=(g^r,Z^mZ^{a_1r})=(\alpha,\beta)$ with secret key $a_1\ {\in}\ SK_A$, the scheme outputs $m'=Z^m=\frac{\beta}{e(\alpha,g)^{a_1}}$.
\end{itemize}

{\color{black}To get the plaintext $m$, the above mentioned PRE scheme requires computing the discrete logarithm of $m'$ in the base $Z$,} i.e., $m=\log_Zm'$. Although this sounds restrictive, there is no limitation to the use of the lifted-ElGamal based PRE scheme as long as the plaintext space is small, as also pointed out by many other literature works \cite{samanthula2015secure,gao2019cloud,shafagh2017secure,derler2017homomorphic}. {In our experiment, the lookup table method \cite{CHEN2022} is used to solve the discrete logarithm.} It is worth emphasizing that the private key is still secure because of its huge value space. {For example, in the experiment we quantify the plaintext to $20$ bits, whereas the private key is typically $1,024$ bits or longer.} Note that the value space doubles for each additional bit.

{\color{black}The most important feature of this PRE scheme is that both the first-level ciphertext and the second-level ciphertext satisfy the property of additive homomorphism, i.e.,}
\begin{equation*}
\begin{aligned}
 E^1_{PK_A}(m_1) \cdot E^1_{PK_A}(m_2)&=(Z^{a_ir_1}{\cdot}Z^{a_ir_2},Z^{m_1}Z^{r_1}{\cdot}Z^{m_2}Z^{r_2})\\
 &=(Z^{a_i(r_1+r_2)},Z^{m_1+m_2}Z^{r_1+r_2})\\
 &=E^1_{PK_A}(m_1+m_2),\\
  E^2_{PK_A}(m_1) \cdot E^2_{PK_A}(m_2)&=(g^{r_1}{\cdot}g^{r_2},Z^{m_1}Z^{a_ir_1}{\cdot}Z^{m_2}Z^{a_ir_2})\\
 &=(g^{r_1+r_2},Z^{m_1+m_2}Z^{a_i(r_1+r_2)})\\
 &=E^2_{PK_A}(m_1+m_2),
\end{aligned}
\end{equation*}
where $r_1$ and $r_2$ are random numbers in $Z_q$, and $i\ {\in}\ \{1,2\}$. In addition, we emphasize that the second line in Eq.~(\ref{Eq:addhom}) is also satisfied, and the verification is similar to the above.

\section{Scheme Construction}\label{Scheme_Construction}
In this section, we bring forward two cloud media sharing schemes, namely FairCMS-I and FairCMS-II. {\color{black}FairCMS-I essentially delegates the re-encryption management of LUTs to the cloud, thus significantly reducing the overhead of the owner side. Nevertheless, FairCMS-I cannot achieve IND-CPA security for the media content. Therefore, we further propose a more secure scheme FairCMS-II, which delegates the re-encryption management of both media content and LUTs to the cloud.}

\subsection{Scheme of FairCMS-I}
\label{sec:FairCMS-I}
{\color{black}In the user-side embedding AFP, since the encrypted media content shared with different users is the same, the encryption of the media content is only executed once. In contrast, due to the personalization of D-LUTs, once a new user initiates a request, the owner must interact with this user to securely distribute the D-LUT under the support of homomorphic en-cryption.} This cost scales linearly with the number of users. It is clear that the biggest source of overhead for the owner is the management and distribution of LUTs. {\color{black}Therefore, the focus of implementing resource-saving cloud media sharing is to find ways to transfer this overhead to the cloud.} Based on this {observation}, the first scheme is as follows.

\noindent \textit{Key, Fingerprint, E-LUT{,} and Encoding Matrix Generation:}

Based on the lifted-ElGamal based PRE scheme introduced in Section~\ref{subsec:pre}, the $k$-th user $U_k$ generates his/her public key $PK_{U_k}=(Z^{a_1},g^{a_2})$ and private key $SK_{U_k}=(a_1,a_2)$. {\color{black}In addition, the owner $O$ generates his/her own public key $PK_O=(Z^{b_1},g^{b_2})$ and private key $SK_O=(b_1,b_2)$,} and the judge $J$ also generates the key pair as $PK_J=(Z^{c_1},g^{c_2})$ and $SK_J=(c_1,c_2)$. Moreover, the $k$-th user generates his/her private fingerprint $\mathbf{b}_k$ locally, and the owner generates a E-LUT $\mathbf{E}$ and a encoding matrix $\mathbf{G}$ locally.

\noindent \textit{Calculation and Distribution of D-LUTs:}

{\color{black}The $k$-th user encrypts his/her private fingerprint $\mathbf{b}_k$ into a second-level ciphertext $E_{PK_{U_k}}^2 ( \mathbf{b}_k )$ using his/her public key $Z^{a_1}$, namely $E_{PK_{U_k}}^2 (\mathbf{b}_k)=[(g^{r_0},Z^{b_{k,0}}Z^{a_1r_0}),\cdots,(g^{r_{L-1}},$ $Z^{b_{k,L-1}}Z^{a_1r_{L-1}})]$, and then sends $E_{P{K_{{U_k}}}}^{2}( {{{\mathbf{b}}_k}} )$ to the cloud.} 
In addition, the $k$-th user also generates a re-encryption key $rk_{{U_k^2} \to {U_k^1}} =g^{a_1a_2}$ with his/her private key $SK_{U_k}=a_1$ and his/her public key $PK_{U_k}=g^{a_2}$, {\color{black}and generates another re-encryption key $r{k_{{U_k} \to J}}=g^{a_1c_2}$ with his/her private key $SK_{U_k}=a_1$ and the judge's public key $PK_{J}=g^{c_2}$, then sends the two re-encryption keys to the cloud.}

The owner encrypts the E-LUT $\mathbf{E}$ into a second-level ciphertext $E^2_{PK_O}( \mathbf{E}(t) )~(0 \leq t \leq T-1)$ using his/her public key $PK_O=Z^{b_1}$, i.e., $E^2_{PK_O} ( \mathbf{E}(t) )=(g^r, Z^{\mathbf{E}(t)} Z^{b_1r})$, and sends $E^2_{PK_O}( \mathbf{E}(t) )$ together with $\mathbf{G}$ to the cloud, who then stores $E_{P{K_{O}}}^{2}( {{\mathbf{E}}(t)} )$ and $\mathbf{G}$. 
In addition, the owner $O$ generates a re-encryption key $rk_{O \to {U_k}}$ under his/her private key $S{K_O}{\rm{ = }}{b_1}$ and the $k$-th user’s public key $P{K_{{U_k}}}{\rm{ = }}{g^{a_2}}$, i.e., $r{k_{O \to {U_k}}} = {g^{b_1a_2}}$, and then sends $r{k_{O \to {U_k}}}$ to the cloud.

The cloud uses the re-encryption key $rk_{O \to {U_k}}$ to change the received encrypted E-LUT under the owner's key, i.e., $E^2_{PK_O}( {{\mathbf{E}}(t)} )$, into $E^1_{PK_{U_k}}( {{\mathbf{E}}(t)} )$, which is the first-level ciphertext of the $k$-th user. 
{\color{black}Furthermore, the cloud changes $E_{PK_{U_k}}^2( \mathbf{b}_k )$ into $E_{PK_{U_k}}^1( \mathbf{b}_k )$ using $rk_{{U^2_k} \to {U^1_k}}$ and changes $E_{PK_{U_k}}^2( \mathbf{b}_k )$ into $E^1_{PK_J}( \mathbf{b}_k )$ using $rk_{{U_k} \to J}$, and then stores $E_{PK_J}^1( \mathbf{b}_k )$ into a set $\mathcal{F}$.}
{\color{black}Subsequently, the cloud calculates $E_{PK_{U_k}}^1( \mathbf{D}_k(t) )$ with $E_{PK_{U_k}}^1( \mathbf{E}(t) )$, $E^1_{PK_{U_k}}( \mathbf{b}_k )$, and $\mathbf{G}$ by Eqs.~(\ref{Eq:enccalwkl}) and (\ref{Eq:enccaldkl}).} We emphasize that this process can be performed successfully because the first level ciphertext of the lifted-ElGamal based PRE scheme satisfies the property of additive homomorphism. 
Finally, the cloud sends $E^1_{PK_{U_k}}( \mathbf{D}_k(t) )$ back to the $k$-th user, who acquires his/her personalized D-LUT $\mathbf{D}_k$ through decryption with his/her private key $SK_{U_k} = a_2$.

\noindent \textit{{\color{black}Cloud-Based Media Sharing:}}

{\color{black}The owner encrypts the media content collection $\{ \mathbf{m} \}$ to be shared using the single-value alteration method and sends the resulting encrypted media content collection $\{ \mathbf{c} \}$ to the cloud, who then stores the encrypted media content collection.}
In addition, the owner sends the binary matrix collection $\{ \mathbf{B}^m \}$ used in the encryption process to the cloud, {\color{black}which is generated locally by the owner based on the session keys and is utilized for encrypting each media content.}

The cloud calculates the secret matrix collection $\{ \bar{\mathbf{G}} \}$ with the binary matrix collection $\{ \mathbf{B}^m \}$ and the encoding matrix $\mathbf{G}$, and then stores the secret matrix collection and the binary matrix collection. Suppose the $k$-th user expects to access one of the owner's media content ${\mathbf{m}}$. {\color{black}The cloud then extracts the encryption result $\mathbf{c}$ of the media content ${\mathbf{m}}$ from the encrypted media content collection,} i.e., ${\mathbf{c}} = {\mathbf{m}} + {{\mathbf{B}}^m}{\mathbf{E}}$, and sends $\mathbf{c}$ to this user. 
Meanwhile, the corresponding binary matrix $\mathbf{B}^m$ is also extracted from the binary matrix collection by the cloud and sends to the $k$-th user. 
{\color{black}Finally, the $k$-th user decrypts $\mathbf{c}$ with his/her personalized D-LUT $\mathbf{D}_k$ and the binary matrix ${\mathbf{B}}^m$ based on Eq.~(\ref{Eq:decandfinger}) to get the fingerprint embedded media content ${\mathbf{m}}^k$.}

\noindent \textit{Arbitration and Traitor Tracing:}

{\color{black}Upon the detection of a suspicious media content copy $\tilde{\mathbf{m}}^k$, the owner resorts to the judge for violation identification.} To this end, the proofs that the owner needs to provide the judge includes the original media content $\mathbf{m}$ with no fingerprints embedded, the corresponding secret matrix $\bar{\mathbf{G}}$, and the set $\mathcal{F}$ that holds all the users' fingerprints. Among them, $\bar{\mathbf{G}}$ and $\mathcal{F}$ is available for download by the owner from the cloud. {\color{black}It is worth mentioning that the fingerprints stored in set $\mathcal{F}$ are encrypted by the judge's public key $PK_J$, so the user's fingerprint will not be leaked to the owner and the cloud, but the plaintext of the fingerprints can be decrypted by the judge with his/her own private key $SK_J=c_2$. With these materials, the judge can make a fair judgment by Eq.~(\ref{Eq:mfdecoder}) or (\ref{Eq:pidecoder}).} 

\noindent \textit{{\color{black}Summary and Discussion:}}

{\color{black}The whole FairCMS-I scheme is summarized as follows.
First, suppose an owner rents the cloud's resources for media sharing, the owner and the cloud execute \textbf{Part 1} as shown in Fig. \ref{fig:2}. Then, suppose the $k$-th user makes a request indicating that he/she wants to access one of the owner’s media content $\mathbf{m}$, the involved entities execute \textbf{Part 2} after the $k$-th user is authorized by the owner as shown in Fig. \ref{fig:3}. Once a suspicious media content copy ${\tilde{\mathbf{m}}}^k$ is detected, the owner resorts to the judge for violation arbitration, i.e., the owner and the judge jointly execute \textbf{Part 3} as shown in Fig. \ref{fig:4}.}

\begin{figure}[!t]
  \centering
  
  \fbox{
    \begin{minipage} [!t]{0.46\textwidth}
      {\textbf{Input:} Security parameter $\lambda$.}
      
      {\textbf{Output:} The cloud stores $\{ \bar{\mathbf{G}} \}$, $\mathbf{G}$, $E^2_{PK_O} (\mathbf{E}(t))$, $\{\mathbf{B}^m \}$, and $\{ \mathbf{c} \}$.}
      
      {\textbf{Procedure:}}
      
      {\underline{Owner:} 
      \begin{itemize}
        \item Generate $( \mathbf{E}, \mathbf{G},PK_O,SK_O,\{SK_m\} )$,
        \item $E^2_{PK_O}( \mathbf{E}( t )) \leftarrow ( \mathbf{E}(t),PK_O )$,
        \item $\{ t_{ih} \} \leftarrow \{ SK_m \}$,
        \item $\{ \mathbf{B}^m \} \leftarrow \{ t_{ih} \}$,
        \item $\{ \mathbf{c} \} \leftarrow ( \{ \mathbf{m} \},\{ \mathbf{B}^m \}, \mathbf{E} )$,
        \item Send $\mathbf{G}$, $E^2_{PK_O}( \mathbf{E} (t))$, $\{ \mathbf{B}^m \}$, and $\{ \mathbf{c} \}$ to the cloud.
      \end{itemize}
      
      \underline{Cloud:}
      \begin{itemize}
        \item $\{ \bar{\mathbf{G}} \} \leftarrow ( \{ \mathbf{B}^m \},\mathbf{G} )$,
        \item Store $\{ \bar{\mathbf{G}} \}$, $\mathbf{G}$, $E^2_{PK_O} (\mathbf{E}(t))$, $\{\mathbf{B}^m \}$, and $\{ \mathbf{c} \}$.
      \end{itemize}} 
    \end{minipage}
  }
  \caption{{FairCMS-I: Media storage (\textbf{Part 1}).}}
  \label{fig:2}
     \vspace{-2pt}
\end{figure}

\begin{figure}[!t]
  \centering
  
  \fbox{
    \begin{minipage} [!t]{0.46\textwidth}
      {\textbf{Input:} Security parameter $\lambda$, $\mathbf{G}$, $E^2_{PK_O} (\mathbf{E}(t))$, $\{\mathbf{B}^m \}$, and $\{ \mathbf{c} \}$.}
      
      {\textbf{Output:} The $k$-th user obtains the watermarked media content $\mathbf{m}^k$.}
      
      {\textbf{Procedure:}}
      
      {\underline{$k$-th User:} 
      \begin{itemize}
        \item Generate $( \mathbf{b}_k,PK_{U_k},SK_{U_k} )$,
        \item $E^2_{PK_{U_k}} ( \mathbf{b}_k ) \leftarrow ( \mathbf{b}_k,PK_{U_k} )$,
        \item $rk_{U^2_k \to U^1_k} \leftarrow ( SK_{U_k},PK_{U_k} )$,
        \item $rk_{{U_k} \to J} \leftarrow ( SK_{U_k},PK_J )$,
        \item Send $E^2_{PK_{U_k}}( \mathbf{b}_k )$, $rk_{U^2_k \to U^1_k}$, and $rk_{U_k \to J}$ to the cloud.
      \end{itemize}
      
      \underline{Owner:}
      \begin{itemize}
        \item $rk_{O \to U_k} \leftarrow ( SK_{O},PK_{U_k} )$,
        \item Send $rk_{O \to U_k}$ to the cloud.
      \end{itemize} 
      
      \underline{Cloud:}
      \begin{itemize}
        \item $E^1_{PK_{U_k}} (\mathbf{E}(t)) \leftarrow ( E^2_{PK_O}(\mathbf{E}(t)),rk_{O \to U_k} )$,
        \item $E^1_{PK_{U_k}}( \mathbf{b}_k ) \leftarrow ( E^2_{PK_{U_k}}( \mathbf{b}_k ),rk_{U^2_k \to U^1_k} )$,
        \item $E^1_{PK_J}( \mathbf{b}_k ) \leftarrow ( E^2_{PK_{U_k}}( \mathbf{b}_k ),rk_{U_k \to J} )$,
        \item Store $E^1_{PK_J} ( \mathbf{b}_k )$ into a set $\mathcal{F}$,
        \item $E^1_{PK_{U_k}}( \mathbf{D}_k(t) ) \leftarrow ( E^1_{PK_{U_k}}( \mathbf{E}(t)), E^1_{PK_{U_k}}( \mathbf{b}_k ),\mathbf{G})$,
        \item Extract the corresponding $\mathbf{c}$ from $\{ \mathbf{c}\}$,
        \item Extract the corresponding $\mathbf{B}^m$ from $\{ \mathbf{B}^m\}$,
        \item Send $E^1_{PK_{U_k}}( \mathbf{D}_k(t) )$, $\mathbf{c}$, and $\mathbf{B}^m$ to the $k$-th user.
      \end{itemize}
      
      \underline{$k$-th User:}
      \begin{itemize}
        \item $\mathbf{D}_k(t) \leftarrow ( E_{PK_{U_k}}^1( \mathbf{D}_k(t)),SK_{U_k})$,
        \item $\mathbf{m}^k \leftarrow ( \mathbf{c}, \mathbf{B}^m, \mathbf{D}_k )$.
      \end{itemize}} 
    \end{minipage}
  }
  \caption{{FairCMS-I: Media sharing (\textbf{Part 2}).}}
  \label{fig:3}
   \vspace{-2pt}
\end{figure}

\begin{figure}[!t]
  \centering
  
  \fbox{
    \begin{minipage} [!t]{0.46\textwidth}
      {\textbf{Input:} {\color{black}The suspicious media content ${\tilde{\mathbf{m}}}^k$, $\{ \bar{\mathbf{G}} \}$, the original media content $\mathbf{m}$,} and the set of fingerprints $\mathcal{F}$.}
      
      {\textbf{Output:} The judge finds the copyright violator.}
      
      {\textbf{Procedure:}}
      
      {\underline{Owner:} 
      \begin{itemize}
        \item Extract and download the corresponding $\bar{\mathbf{G}}$ from the cloud, 
        \item Download the set $\mathcal{F}$ from the cloud,
        \item {\color{black}Send $\mathbf{m}$, $\bar{\mathbf{G}}$, and $\mathcal{F}$ to the judge.}
      \end{itemize}
      
      \underline{Judge:}
      \begin{itemize}
        \item {\color{black}Decrypt the fingerprints in the set $\mathcal{F}$ with the private key $SK_J$ of the judge,}
        \item $\tilde{\mathbf{b}}_k \leftarrow ( \tilde{\mathbf m}^k,\bar{\mathbf{G}}, \mathbf{m})$,
        \item Compare $\tilde{\mathbf{b}}_k$ to the decrypted fingerprints in $\mathcal{F}$.
      \end{itemize}} 
    \end{minipage}
  }
  \caption{{\color{black}{FairCMS-I: Traitor tracing (\textbf{Part 3}).}}}
  \label{fig:4}
   \vspace{-2pt}
\end{figure}

{\color{black}We emphasize that \textbf{Part 1} is just an initialization step that only needs to be executed once before media sharing begins, while \textbf{Part 2} is executed upon requesting of media content copy from each authorized user. \textbf{Part 3} is only executed for each detected suspicious media content copy.} 
Therefore, in the case of a large number of users, the owner's overhead in \textbf{Part 2} should be the primary concern for a media sharing system. 
Fortunately, in our design, by delegating the operations securely to the cloud, {\color{black}now in \textbf{Part 2} the owner only needs to calculate and send a re-encryption key,} which only incurs negligible overhead. As a result, this scheme well solves the bottleneck caused by insufficient owner-side resources. Also, since \textbf{Part 2} is executed online for each user, this scheme clearly meets the scalability requirements. {The achievement of the three security goals of FairCMS-I will be discussed in detail in Section~\ref{Achieving}.}

In addition, it is noted that two re-encryption keys $rk_{U^2_k \to U^1_k}$ and $rk_{U_k \to J}$ are used in the scheme to protect the users' fingerprints. Among them, $rk_{U^2_k \to U^1_k}$ is used to change $E^2_{PK_{U_k}}(\mathbf{b}_k)$ into $E^1_{PK_{U_k}}(\mathbf{b}_k)$, and $rk_{U_k \to J}$ is used to change $E^2_{PK_{U_k}}(\mathbf{b}_k)$ into $E^1_{PK_J}(\mathbf{b}_k)$. 
In comparison, the method adopted in \cite{zhang2018you} is that the $k$-th user encrypts his/her fingerprint $\mathbf{b}_k$ with his/her public key $PK_{U_k}$ and judge's public key $PK_{J}$ respectively, and then sends the two results $E^1_{PK_{U_k}}(\mathbf{b}_k)$ and $E^1_{PK_J}(\mathbf{b}_k)$ to the cloud. 
However, in this case, an unfaithful user can evade traitor tracing by producing two different fingerprints, i.e., a $\mathbf{b}^{'}_k$ is used to produce $E^1_{PK_{U_k}}(\mathbf{b}^{'}_k)$ and another $\mathbf{b}^{''}_k$ is used to produce $E^1_{PK_J}(\mathbf{b}^{''}_k)$. 
Our proposed scheme avoids this vulnerability because the user only has freedom to generate one fingerprint $\mathbf{b}_k$, and $E^1_{PK_{U_k}}(\mathbf{b}_k)$ (for fingerprint embedding) and $E^1_{PK_J}(\mathbf{b}_k)$ (for traitor tracing) are re-encryption results of the same second-level ciphertext $E^2_{PK_{U_k}}(\mathbf{b}_k)$.

{\color{black}Moreover, FairCMS-I does not perform any processing on the encrypted media content stored in the cloud,} but only performs homomorphic operations and re-encryption operations on the encrypted LUT and fingerprint that are much smaller in size, {which results} in outstanding cloud-side efficiency. In contrast, {\color{black}the two schemes proposed by Zhang \textit{et al.} \cite{zhang2018you} both require homomorphic operations and re-encryption operations on the media content,} which obviously consumes a lot more cloud resources. This means that FairCMS-I can save users a large amount of cost of renting cloud resources, thus achieving one of the prominent advantages of our work.

Finally, we emphasize that the attack from the communication channel is not considered in the adopted threat model, so $\mathbf{B}^m$, $\mathbf{G}$, and $\bar{\mathbf{G}}$ are all transmitted in plaintext. In practical applications, if the communication channel does not meet this assumption, simple modifications can be made to the proposed scheme. On the one hand, we can use public key cryptography to encrypt the transmitted plaintext. For example, in \textbf{Part 1} the owner can encrypt $\mathbf{G}$ with the cloud's public key $PK_C$ before sending it to the cloud, who then decrypts $E_{PK_C}(\mathbf{G})$ with his/her own private key $SK_C$. {\color{black}On the other hand, the same $\mathbf{B}^m$ can be generated on different entities by sharing the session key $SK_m$, in the same way as in the AFP scheme. Meanwhile, bandwidth savings can also be obtained by transferring $SK_m$ instead of $\mathbf{B}^m$. Apart from the above, message authentication code and digital signature can also be added if integrity and source authentication are needed.}

\subsection{Scheme of FairCMS-II}
\label{sec:FairCMS}
{\color{black}The encryption and decryption on the media content $\mathbf{m}$ in FairCMS-I are essentially symmetric-key cipher although the E-LUT and D-LUT used are different,} so it is highly efficient since only the single-value additive operation is used in this cipher. 
{\color{black}However, the IND-CPA security of the encrypted media content collection $\{\mathbf{c}\}$ stored in the cloud cannot be achieved \cite{celik2008lookup}. In applications with high media privacy requirements,} a more secure cloud media sharing scheme is desired. {\color{black}In this concern, we propose the second cloud media sharing scheme FairCMS-II, in which all media content stored in the cloud are encrypted by the lifted-ElGamal based encryption scheme.} 

{\color{black}In FairCMS-II, the key, fingerprint, E-LUT, and encoding matrix are all generated in the same way as in FairCMS-I. In addition,} the calculation and management process for D-LUTs is also the same as in FairCMS-I, except that there is no need to distribute the resulting D-LUTs to users. That is, the cloud is not required to send the last calculated $E^1_{PK_{U_k}}(\mathbf{D}_k(t))$ (\textbf{Part 2} of FairCMS-I) to the $k$-th user, but instead storing it in a set $\mathcal{D}$. As will be shown in the following detailed discussion, this trick will enable the full usage of the computational power of the cloud and reduce bandwidth usage.

\noindent \textit{{\color{black}Cloud-Based Media Sharing:}}

{\color{black}The owner encrypts the media content collection $\{\mathbf{m}\}$ to be shared with his/her public key $PK_O=Z^{b_1}$, and then sends the resulting encrypted media content collection $\{E^2_{PK_O}(\mathbf{m})\}$ $=\{(g^r,Z^{\mathbf{m}}Z^{b_1r})\}$ to the cloud.} In addition, the owner use the session keys he/she selects to generate the binary matrix collection $\{\mathbf{B}^m\}$, which is then sent to the cloud.

The cloud calculates the secret matrix collection $\{ \bar{\mathbf{G}} \}$ with $\{\mathbf{B}^m\}$ and $\mathbf{G}$, and then stores $\{\bar{\mathbf{G}}\}$ and $\{\mathbf{B}^m\}$. 
{\color{black}In addition, the cloud calculates the encrypted ciphertext media content collection $\{E^2_{PK_O}(\mathbf{c})\}$ from $\{E^2_{PK_O}(\mathbf{m})\}$, $E^2_{PK_O}(\mathbf{E}(t))${,} and $\{\mathbf{B}^m\}$ by}
\begin{equation*}
E^2_{PK_O}(c_i)=E^2_{PK_O}(m_i)\cdot \prod\limits^{T - 1}_{t = 0} E^2_{PK_O}(\mathbf{E}(t))^{\mathbf{B}^m(i,t)},
\end{equation*}
where $c_i$ {(resp. $m_i$)} is the $i$-th element of $\mathbf{c}$ {(resp. $\mathbf{m}$)} and $i\in\{0,2,...,M-1\}$. {\color{black}We emphasize that this process can be accomplished successfully as the second-level ciphertext of the lifted-ElGamal based PRE scheme satisfies the property of additive homomorphism.} Afterwards, the cloud stores $\{E^2_{PK_O}(\mathbf{c})\}$.

Suppose the $k$-th user want to acquire one of the owner's media content $\mathbf{m}$. {\color{black}The cloud then extracts the corresponding encrypted ciphertext media content $E^2_{PK_O}(\mathbf{c})$ from the encrypted ciphertext media content collection.} Subsequently, the cloud converts $E^2_{PK_O}(\mathbf{c})$ into $E^1_{PK_{U_k}}(\mathbf{c})$ with the received re-encryption key $rk_{O \to {U_k}}$ using the PRE technique. Joint decryption and fingerprinting is then processed by the cloud in the ciphertext domain as follows:
\begin{equation*}
\begin{aligned}
E^1_{PK_{U_k}}({m_i^k} ) = E^1_{PK_{U_k}}({c_i}) 
\cdot \prod\limits_{t = 0}^{T - 1}E^1_{PK_{U_k}}(\mathbf{D}_k(t))^{\mathbf{B}^m(i,t)}.
\end{aligned}
\end{equation*}
{\color{black}The cloud then sends the resulting $E^1_{PK_{U_k}}(\mathbf{m}^k)$ to the $k$-th user, who decrypts $E^1_{PK_{U_k}}(\mathbf{m}^k)$ with his/her own private key $SK_{U_k}=a_2$ to obtain the media content $\mathbf{m}^k$ embedded with his/her personal fingerprint. We emphasize that the cloud cannot get the plaintext version $\mathbf{m}^k$ of the watermarked media content as the cloud does not have the private key $SK_{U_k}$ of the $k$-th user to decrypt the resulting $E^1_{PK_{U_k}}(\mathbf{m}^k)$.}

{\color{black}The whole FairCMS-II scheme is summarized as follows. First, suppose an owner rents the cloud's resources for media sharing, the owner and the cloud execute \textbf{Part 1} as shown in Fig. \ref{fig:5}. Then, suppose the $k$-th user makes a request indicating that he/she wants to access one of the owner’s media content $\mathbf{m}$, the entities execute \textbf{Part 2} after the $k$-th user is authorized by the owner as shown in Fig. \ref{fig:6}. Finally, the arbitration and traitor tracing process follows the same approach of FairCMS-I and is thus omitted here.}

\begin{figure}[!t]
  \centering
  
  \fbox{
    \begin{minipage} [!t]{0.468\textwidth}
      {\textbf{Input:} Security parameter $\lambda$.}
      
      {\textbf{Output:} The cloud stores $\{\bar{\mathbf{G}}\}$, $\mathbf{G}$, $E^2_{PK_O}(\mathbf{E}(t))$, $\{\mathbf{B}^m\}$, and $\{E^2_{PK_O}(\mathbf{c})\}$.}
      
      {\textbf{Procedure:}}
      
      {\underline{Owner:} 
      \begin{itemize}
        \item Generate $( \mathbf{E},\mathbf{G},PK_O,SK_O,\{SK_m\} )$,
        \item $E^2_{PK_O}( \mathbf{E}(t) ) \leftarrow ( \mathbf{E}(t),PK_O )$,
        \item $\{E^2_{PK_O}(\mathbf{m})\} \leftarrow ( \{\mathbf{m}\},PK_O)$ ,
        \item $\{t_{ih}\} \leftarrow \{SK_m\}$,
        \item $\{\mathbf{B}^m\} \leftarrow \{t_{ih}\}$,
        \item Send $\mathbf{G}$, $E^2_{PK_O}(\mathbf{E}(t))$, $\{\mathbf{B}^m\}$, and $\{E^2_{PK_O}(\mathbf{m})\}$ to the cloud.
      \end{itemize}
      
      \underline{Cloud:}
      \begin{itemize}
        \item $\{\bar{\mathbf{G}}\} \leftarrow ( \{\mathbf{B}^m\},\mathbf{G} )$,
        \item $\{E^2_{PK_O}(\mathbf{c})\} \leftarrow ( \{E^2_{PK_O}(\mathbf{m})\},E^2_{P{K_O}}(\mathbf{E}(t)),\{\mathbf{B}^m\} )$,
        \item Store $\{\bar{\mathbf{G}}\}$, $\mathbf{G}$, $E^2_{PK_O}(\mathbf{E}(t))$, $\{\mathbf{B}^m\}$, and $\{E^2_{PK_O}(\mathbf{c})\}$.
      \end{itemize}} 
    \end{minipage}
  }
  \caption{{FairCMS-II: Media storage (\textbf{Part 1}).}}
  \label{fig:5}
  \vspace{-2pt}
\end{figure}

\begin{figure}[!t]
  \centering
  
  \fbox{
    \begin{minipage} [!t]{0.468\textwidth}
      {\textbf{Input:} Security parameter $\lambda$, $\mathbf{G}$, $E^2_{PK_O}(\mathbf{E}(t))$, $\{\mathbf{B}^m\}$, and $\{E^2_{PK_O}(\mathbf{c})\}$.}
      
      {\textbf{Output:} The $k$-th user obtains the watermarked media content $\mathbf{m}^k$.}
      
      {\textbf{Procedure:}}
      
      {\underline{$k$-th User:} 
      \begin{itemize}
        \item Generate $( \mathbf{b}_k,PK_{U_k},SK_{U_k} )$,
        \item $E^2_{PK_{U_k}}( \mathbf{b}_k ) \leftarrow ( \mathbf{b}_k,PK_{U_k} )$,
        \item $rk_{{U_k^2} \to {U_k^1}} \leftarrow ( SK_{U_k},PK_{U_k} )$,
        \item $rk_{{U_k} \to J} \leftarrow ( SK_{U_k},PK_J )$,
        \item Send $E^2_{PK_{U_k}}( \mathbf{b}_k )$, $rk_{{U_k^2} \to {U_k^1}}$, and $rk_{{U_k} \to J}$ to the cloud.
      \end{itemize}
      
      \underline{Owner:}
      \begin{itemize}
        \item $rk_{O \to {U_k}} \leftarrow ( SK_{O},PK_{U_k} )$,
        \item Send $rk_{O \to {U_k}}$ to the cloud.
      \end{itemize} 
      
      \underline{Cloud:}
      \begin{itemize}
        \item $E^1_{PK_{U_k}}(\mathbf{E}(t)) \leftarrow ( E^2_{PK_O}(\mathbf{E}(t)),rk_{O \to {U_k}} )$,
        \item $E^1_{PK_{U_k}}( \mathbf{b}_k ) \leftarrow ( E^2_{PK_{U_k}}( \mathbf{b}_k ),rk_{{U_k^2} \to {U_k^1}} )$,
        \item $E^1_{PK_J}( \mathbf{b}_k ) \leftarrow ( E^2_{PK_{U_k}}( \mathbf{b}_k ),rk_{{U_k} \to J} )$,
        \item Store $E^1_{PK_J}( \mathbf{b}_k )$ into a set $\mathcal{F}$,
        \item $E^1_{PK_{U_k}}( \mathbf{D}_k(t) ) \leftarrow ( E^1_{PK_{U_k}}( \mathbf{E}(t) ), E^1_{PK_{U_k}}( \mathbf{b}_k ),\mathbf{G} )$,
        \item Extract the corresponding $E^2_{PK_O}(\mathbf{c})$ from $\{E^2_{PK_O}(\mathbf{c})\}$,
        \item Extract the corresponding $\mathbf{B}^m$ from $\{\mathbf{B}^m\}$,
        \item $E^1_{PK_{U_k}}(\mathbf{c}) \leftarrow ( E^2_{PK_O}(\mathbf{c}),rk_{O \to {U_k}} )$,
        \item $E^1_{PK_{U_k}}(\mathbf{m}^k) \leftarrow ( E^1_{PK_{U_k}}(\mathbf{c}),E^1_{PK_{U_k}}( \mathbf{D}_k(t) ), \mathbf{B}^m )$,
        \item Send $E^1_{PK_{U_k}}(\mathbf{m}^k)$ to the $k$-th user.
      \end{itemize}
      
      \underline{$k$-th User:}
      \begin{itemize}
        \item {$\mathbf{m}^k \leftarrow ( E^1_{PK_{U_k}}(\mathbf{m}^k), PK_{U_k} )$.}
      \end{itemize}} 
    \end{minipage}
  }
  \caption{{FairCMS-II: Media sharing (\textbf{Part 2}).}}
  \label{fig:6}
   \vspace{-2pt}
\end{figure}

{\color{black}The owner-side efficiency and scalability performance of FairCMS-II are directly inherited from FairCMS-I,} and the achievement of the three security goals of FairCMS-II is also shown in Section~\ref{Achieving}. Comparing to FairCMS-I, {\color{black}it is easy to see that in FairCMS-II the cloud's overhead is increased considerably due to the adoption of re-encryption operations and homomorphic operations on the ciphertext of the media content,} which means it will cost more for the owner on renting the cloud's resources. {\color{black}We regard this as the trade-off between security and cost. In actual use, the two proposed schemes can be selected according to different security requirements.} {The flexibility of choice in cloud-side efficiency also constitutes one of the prominent advantages of our work.}

We emphasize that the joint decryption and fingerprinting operation can also be transferred to users. {\color{black}That is, in \textbf{Part 2}, the cloud can send $\{E^1_{PK_{U_k}}(\mathbf{c})\}$ to the $k$-th user instead of calculating $E^1_{PK_{U_k}}(\mathbf{m}^k)$ in the ciphertext domain.} Subsequently, the $k$-th user needs to decrypt $\{E^1_{PK_{U_k}}(\mathbf{c})\}$ with his/her own private key, and then performs joint decryption and fingerprinting operation with his/her personal D-LUT, {\color{black}which can be obtained from the cloud as in FairCMS-I.} In this case, some of the overhead is transferred from the cloud to the user. In general, it is recommended that the cloud performs the joint decryption and fingerprinting operation in the ciphertext domain {so as to take full advantage of the power of the cloud.}

\section{Achieving the Security Goals}
\label{Achieving}
This section proves that both FairCMS-I and FairCMS-II can solve the three security/privacy problems with different privacy/efficiency trade-offs. 

\subsection{Problem 1: {Data Privacy Leakage and Access Control}}
\label{Achieving_Problem1}
{\color{black}As discussed in Section~\ref{sec:problemstatement}, the private data in this paper consists of the users' fingerprints, the LUTs, and the owner's media content.}

{\newtheorem{thm}{\bf Theorem}
\begin{thm}\label{thm1}
{\color{black}The users’ fingerprints and the LUTs in both schemes, as well as the owner's media content in FairCMS-II are privacy-assured against the chosen-plaintext attack by the cloud under the assumptions of extended Decisional Bilinear Diffie–Hellman (eDBDH) \cite{boneh2003identity} and discrete logarithm.}
\end{thm}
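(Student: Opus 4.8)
The plan is to reduce the claimed privacy to the CPA-IND (semantic) security of the underlying lifted-ElGamal based PRE scheme of Sec.~\ref{subsec:pre}, which itself rests on eDBDH, and to invoke the discrete logarithm assumption only to rule out recovery of the secret exponents from the group elements the cloud observes. First I would fix the adversary to be the honest-but-curious cloud and write down its entire view. By the no-collusion assumption between users and the cloud, this view never contains any of the private keys $SK_{U_k}$, $SK_O$, or $SK_J$. Every occurrence of a fingerprint bit $b_{k,l}$, an E-LUT entry $\mathbf{E}(t)$, a D-LUT entry $\mathbf{D}_k(t)$, and (in CREAMS-II) a media symbol $m_i$ enters the cloud's view \emph{only} through a first- or second-level PRE ciphertext, whereas the remaining objects the cloud holds, namely $\mathbf{G}$, $\{\mathbf{B}^m\}$, $\{\bar{\mathbf{G}}\}$ and the three re-encryption keys, are independent of those plaintexts. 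The latter are therefore precisely the auxiliary data that a simulator can produce on its own, so that distinguishing the private plaintexts cannot exploit them.

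The core is a sequence of reductions, one per data type. For the fingerprints, I would suppose a PPT cloud that distinguishes $E^2_{PK_{U_k}}(\mathbf{b}_k^{(0)})$ from $E^2_{PK_{U_k}}(\mathbf{b}_k^{(1)})$, together with their re-encrypted forms $E^1_{PK_{U_k}}(\cdot)$ and $E^1_{PK_J}(\cdot)$, with non-negligible advantage, and build a simulator $\mathcal{B}$ that embeds an eDBDH instance into $Z=e(g,g)$ and into the challenged public key, answers the cloud with a ciphertext whose masking factor is the eDBDH challenge element, and forwards the cloud's guess as its own eDBDH decision. The LUT cases are identical after replacing the challenged key by $PK_O$ for $E^2_{PK_O}(\mathbf{E}(t))$ and by $PK_{U_k}$ for $E^1_{PK_{U_k}}(\mathbf{E}(t))$; crucially, since $\mathbf{D}_k(t)$ is assembled from $\mathbf{E}(t)$ and $\mathbf{b}_k$ purely by the additive-homomorphic operations of Eqs.~(\ref{Eq:enccalwkl}) and~(\ref{Eq:enccaldkl}), the ciphertext $E^1_{PK_{U_k}}(\mathbf{D}_k(t))$ carries no information beyond that of its inputs and needs no separate assumption. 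For CREAMS-II the media reduces in the same way to the CPA-IND security of $E^2_{PK_O}(\mathbf{m})$, its homomorphically derived $E^2_{PK_O}(\mathbf{c})$, and the re-encrypted $E^1_{PK_{U_k}}(\mathbf{m}^k)$, all of which are again single challenge ciphertexts masked by an eDBDH-controlled factor.

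The step I expect to be the main obstacle is keeping the simulator consistent in the presence of the three re-encryption keys $rk_{O\to U_k}=g^{b_1a_2}$, $rk_{U_k^2\to U_k^1}=g^{a_1a_2}$, and $rk_{U_k\to J}=g^{a_1c_2}$: handing these to the distinguisher is exactly what pushes the argument past plain DBDH, since $\mathcal{B}$ must manufacture each key and the derived re-encryption steps while one of its exponents is the unknown challenge exponent. The plan is to have $\mathcal{B}$ sample the non-challenged secret exponents itself (choosing, e.g., $a_2,b_2,c_2$ honestly) so that each re-encryption key contains at most one challenge exponent, and then use the auxiliary element of the eDBDH tuple, rather than plain DBDH, to simulate the single key and single re-encryption that would otherwise require the challenge secret; here the discrete logarithm assumption guarantees that neither $b_1a_2$, $a_1a_2$, $a_1c_2$ nor the secret components of any public key can be extracted from these group elements, so the simulation leaks nothing extra. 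Because user--cloud collusion is forbidden, $\mathcal{B}$ is never forced to output a secret key that would trivially decrypt the challenge, so the reduction is well defined. Collecting the per-type advantages through a standard hybrid over all indices $t$, $l$, and $i$ then bounds the cloud's total advantage by a polynomial multiple of the eDBDH and discrete logarithm advantages, which establishes the theorem.
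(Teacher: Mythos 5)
Your plan is sound, but it takes a genuinely different route from the paper. You set out to re-prove the CPA security of the lifted-ElGamal PRE scheme \emph{directly} under eDBDH: enumerate the cloud's view, observe that $\mathbf{G}$, $\{\mathbf{B}^m\}$, $\{\bar{\mathbf{G}}\}$ and the re-encryption keys are plaintext-independent, note that the derived ciphertexts ($E^1_{PK_{U_k}}(\mathbf{D}_k(t))$, $E^2_{PK_O}(\mathbf{c})$, $E^1_{PK_{U_k}}(\mathbf{m}^k)$) are public homomorphic functions of the challenge ciphertexts and the cloud-visible auxiliary data (and that the first-level versions of $\mathbf{b}_k$ are publicly re-derivable from the second-level one via the $rk$'s the cloud already holds), and then build an eDBDH simulator per challenged key, closing with a hybrid over users and indices. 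The paper does none of this: it gives a single modular reduction in which a PPT adversary $\mathcal{A}'$ against the \emph{native} ElGamal-based PRE scheme of \cite{ateniese2006improved} is constructed from any adversary $\mathcal{A}$ against the lifted scheme $\Pi$, simply by lifting every message to $Z^m$ before querying (a lifted ciphertext of $m$ is exactly a native ciphertext of $Z^m$, so the views coincide and $Pr[{\textnormal{PubK}}^{\textnormal{cpa}}_{\mathcal{A}',\Pi'}(\lambda)=1]=Pr[{\textnormal{PubK}}^{\textnormal{cpa}}_{\mathcal{A},\Pi}(\lambda)=1]$), and then cites the known eDBDH/discrete-log security of the native scheme. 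The step you yourself flag as the main obstacle --- keeping the simulator consistent while handing out $rk_{O\to U_k}$, $rk_{U_k^2\to U_k^1}$, $rk_{U_k\to J}$ with one exponent bound to the challenge --- is precisely the content of the cited proof, so the paper's route buys brevity and avoids redoing the delicate part, while your route, if carried through, buys a self-contained argument with an explicit concrete-security loss from the hybrids. One caution: your invocation of the discrete logarithm assumption to conclude that ``the simulation leaks nothing extra'' is misplaced --- soundness of an IND-CPA simulation follows from the simulated elements having the right distribution, not from DL hardness; in \cite{ateniese2006improved} DL instead underwrites master-secret security (unrecoverability of the delegator's exponent from re-encryption keys), which is why it appears alongside eDBDH in the theorem statement.
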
 
}

{\begin{proof}
In FairCMS-I and FairCMS-II, the $k$-th user encrypts his/her fingerprint $\mathbf{b}_k$ into $E^2_{PK_{U_k}}(b_k)$ with his/her own public key $PK_{U_k}$, and then sends $E^2_{PK_{U_k}}(b_k)$ together with two re-encryption key $rk_{U_k^2 \to U_k^1}$ and $rk_{U_k \to J}$ to the cloud. As a result, different encrypted versions of the fingerprint accessed by the cloud are $E^2_{PK_{U_k}}(b_k)$, $E^1_{PK_{U_k}}(b_k)$, and $E^1_{PK_J}(b_k)$. 

We then check the privacy of LUTs, i.e., the E-LUT and D-LUTs. In the two proposed schemes, the owner shares the E-LUT $\mathbf{E}$ in the encrypted form  $E^2_{PK_O}( \mathbf{E}(t) )$ to the cloud who then utilizes the PRE technique to convert $E^2_{PK_O}( \mathbf{E}(t) )$ into $E^1_{PK_{U_k}}( \mathbf{E}(t) )$ with the re-encryption key  $rk_{O \to {U_k}}$. In addition, the cloud calculates $E^1_{PK_{U_k}}( \mathbf{D}_k )$ in the ciphertext domain with the homomorphic property, as shown in Eqs.~(\ref{Eq:enccalwkl}) and (\ref{Eq:enccaldkl}). To sum up, only $E^2_{PK_O}( \mathbf{E}(t) )$, $E^1_{PK_{U_k}}( \mathbf{E}(t) )$, and $E^1_{PK_{U_k}}( \mathbf{D}_k )$ are exposed to the cloud. 

{\color{black}By a similar analysis, the ciphertexts about media content available to the cloud are $E^2_{PK_O}(\mathbf{m})$, $E^2_{PK_O}(\mathbf{c})$, $E^1_{PK_{U_k}}(\mathbf{c})$, and $E^1_{PK_{U_k}}(\mathbf{m}^k)$. 
As a result, the problem of proving that the users' fingerprints, the LUTs, and the owner's media content are protected against the honest-but-curious cloud translates to the problem of proving that both first-level and second-level ciphertexts of the lifted-ElGamal based PRE scheme are all secure, which can be achieved by the reduction below.}

Let $\Pi$ denote the lifted-ElGamal based PRE scheme used in this paper and $\Pi'$ denote the native ElGamal based PRE scheme proposed in \cite{ateniese2006improved}, where the difference only lies in whether the message $m$ is lifted. Then we construct a probabilistic polynomial-time (PPT) adversary $\mathcal{A}'$ who attacks $\Pi'$ by using the PPT adversary $\mathcal{A}$ who attacks $\Pi$ as a subroutine. {\color{black}We further show if the subroutine $\mathcal{A}$ breaks the definition of indistinguishability of $\Pi$,} then the constructed adversary $\mathcal{A}'$ will break the indistinguishability of $\Pi'$, which is always false since $\Pi'$ is secure. 
The details of constructing $\mathcal{A}'$ is as follows.

\textbf{The experiment that $\mathcal{A}'$ attacks $\Pi'$:}
\begin{enumerate}[~~1.]
\item Run $\mathcal{A}(1^\lambda)$ to obtain a pair of messages $m_0$ and $m_1$.
\item Compute $m'_0=Z^{m_0}$ and $m'_1=Z^{m_1}$, then outputs a pair of messages $m'_0$ and $m'_1$.
\item Forward $m'_0$ and $m'_1$ to $\mathcal{A}'$'s encryption oracle $\mathcal{O'}$ and get $E^1_{PK}(m_b)$ and $E^2_{PK}(m_b)$, give them to $\mathcal{A}$.
\item Whenever $\mathcal{A}$ queries the encryption of other messages $m$, $\mathcal{A}'$ answers this query in the following way:
\begin{enumerate}[(a)]
\item Compute $m'=Z^{m}$ locally,
\item Query $\mathcal{O'}$ with $m'$ and obtain responses $E^1_{PK}(m)$ and $E^2_{PK}(m)$, then return them to $\mathcal{A}$.
\end{enumerate}
\item Continue answering encryption oracle queries of $\mathcal{A}$ as before until $\mathcal{A}$ gives its output. If $\mathcal{A}$ outputs 0, $\mathcal{A}'$ outputs $b'=0$;  otherwise, $\mathcal{A}'$ output $b'=1$.
\end{enumerate}

The output of the experiment is defined to be 1 if $b'=b$, and 0 otherwise. If ${\textnormal{PubK}}^{\textnormal{cpa}}_{\mathcal{A}',\Pi'}(\lambda)=1$, we say that $\mathcal{A}'$ succeeds. Note that the view of $\mathcal{A}$ when run as a subroutine of $\mathcal{A}'$ is distributed identically to the view of $\mathcal{A}$ when it interacts with $\Pi$ itself, and $\mathcal{A}'$ directly takes the final output of $\mathcal{A}$ as its final output $b'$. So
\begin{equation}
Pr[{\textnormal{PubK}}^{\textnormal{cpa}}_{\mathcal{A}',\Pi'}(\lambda)=1]=Pr[ \textnormal{PubK}^{\textnormal{cpa}}_{\mathcal{A},\Pi}(\lambda)=1].
\label{Eq:proofeq}
\end{equation}

{\color{black}According to the proof result by Ateniese \textit{et al.} \cite{ateniese2006improved}, if and only if the assumptions of eDBDH and discrete logarithm are established, the native ElGamal based PRE scheme meets the IND-CPA security, i.e.,}
\begin{equation*}
|\frac{1}{2}-Pr[{\textnormal{PubK}}^{\textnormal{cpa}}_{\mathcal{A}',\Pi'}(\lambda)=1]| \leq \textnormal{negl}(\lambda),
\end{equation*}
where $\textnormal{negl}(\lambda)$ denotes a negligible function parameterized by $\lambda$. Thus, based on Eq.~(\ref{Eq:proofeq}), we have
\begin{equation*}
|\frac{1}{2}-Pr[{\textnormal{PubK}}^{\textnormal{cpa}}_{\mathcal{A},\Pi}(\lambda)=1]| \leq \textnormal{negl}(\lambda),
\end{equation*}
which concludes the proof.
\end{proof}
}

{\color{black}We then turn to analyze the privacy protection performance of the owner's media content in FairCMS-I, where the owner encrypts the media content $\mathbf{m}$ based on ${\mathbf{B}^m}$ and $\mathbf{E}$ using the single-value alteration encryption, namely Eq.~(\ref{Eq:encryption}).} 
Although the IND-CPA security of $\mathbf{c}$ cannot be rigorously proved, this is not considered a major vulnerability for the following two reasons. For one thing, {\color{black}as long as the variance ${\sigma_E}^2$ of $\mathbf{E}$ is set large enough,} the statistical hiding can be achieved \cite{2011Efficient}, i.e., a PPT adversary cannot distinguish the statistical difference between two different encryption results. The security of the single-value alteration encryption can be further enhanced by computing $\mathbf{c}$ modulo an integer $R$ \cite{bianchi2014ttp}. In this way, as long as $R \ll \sigma_E$, regardless of the signal values in plaintext, the values of the encrypted signals are approximately uniformly distributed on $[0,R-1]$. In addition, the linear estimation attack on ciphertext $\mathbf{c}$ was also analyzed in \cite{celik2008lookup}, {\color{black}and the result shows that the attack quality is poor as long as the size $T$ of the LUTs (w.r.t. the size $M$ of the media content) is set large enough. For another, as also argued in \cite{celik2008lookup,xiao2023fingerchain,xiao2022preview}, it is not always necessary to perfectly guarantee the security of $\mathbf{c}$ in general scenarios (in other words, non-highly confidential scenarios), but rather, it would be sufficient to make cracking the encryption as difficult as removing the watermark from the watermarked media content. In FairCMS-I, the watermark is embedded into $\mathbf{m}$ based on Eq.~(\ref{Eq:decryption}), i.e., $\mathbf{m}_k=\mathbf{m}+\mathbf{B}_m\mathbf{W}_k$, which is identical in form with Eq.~(\ref{Eq:encryption}) used for media content encryption, i.e., $\mathbf{c} = \mathbf{m} + \mathbf{B}_m \mathbf{E}$. As a result, if an attacker succeeds in cracking $\mathbf{c}$ without knowing the E-LUT $\mathbf{E}$, then he/she can remove the embedded watermark from $\mathbf{m}_k$ in the same way without knowing the W-LUT $\mathbf{W}_k$ to obatin $\mathbf{m}$, and vice versa. Furthermore, it might be harder to crack $\mathbf{c}$ because the variance of $\mathbf{E}$ is much larger than that of $\mathbf{W}_k$. Therefore, we conclude that the protection of media content privacy in FairCMS-I is sufficient for general scenarios.}


{We finally analyze the effect of the two schemes on access control. For FairCMS-I, access control is achieved by having the owner distribute D-LUTs only to authorized users, which can be realized here because distributing D-LUTs requires the owner to give the re-encryption key $rk_{O \to {U_k}}$. Unauthorized users who do not own the D-LUT cannot access the media content because they cannot perform the media decryption calculation, {\color{black}namely Eq. (\ref{Eq:decryption}). Furthermore, users with D-LUTs are prevented from accessing the media content outside the authorized scope because different media contents correspond to different $\mathbf{B}^m$.} In fact, this access control mechanism is entirely inherited from the original AFP scheme \cite{bianchi2014ttp}. This mechanism is not resistant to collusion between users, in which case $\mathbf{B}^m$ can be obtained from other authorized users. FairCMS-II solves this flaw by having the cloud compute Eq. (\ref{Eq:decryption}) in the ciphertext domain and send only $E^1_{PK_{U_k}}(\mathbf{m}^k)$ to authorized users.}

\subsection{Problems 2 and 3: Owner’s Copyright and Users’ Rights}
{\color{black}In the absence of any collusion between the owner, users, and the cloud, the two proposed schemes naturally inherit the performance of the TTP-free AFP \cite{bianchi2014ttp} in protecting the media owner’s copyright and the users' rights.}

In FairCMS-I and FairCMS-II, on the one hand, although the user generates the fingerprint $\mathbf{b}_k$ on his/her own, the user is unable to know the sequence $\bar{\mathbf{G}}\mathbf{w}_k$ that embedded in the watermarked media content due to the secrecy of $\mathbf{G}$. In the case of non-collusion, the linear assessment attack is a commonly used attack strategy for users to remove their own watermarks from the watermarked content. In this regard, Celik \textit{et al.} \cite{celik2008lookup} proved through detailed theoretical analysis that this attack strategy hardly works as long as the power of the watermark (w.r.t. the signal power) and the size of the LUTs (w.r.t. the content size) are set large enough. A more difficult situation to prevent is \textbf{collusion} attack, where a coalition of dishonest users compare their respective D-LUTs with each other in the hope of obtaining an untraceable copy of the content. The anti-collusion performance of the AFP scheme used in this paper is similar to that of a generic spread-spectrum watermark \cite{celik2008lookup}, and at the same time, it is compatible with mainstream anti-collusion approaches. Following this idea, two anti-collusion solutions for the {LUT-based} AFP scheme were later proposed by Bianchi \textit{et al.} \cite{bianchi2015anticollusion}.

On the other hand, since the owner has no knowledge of the $k$-th user's fingerprint $\mathbf{b}_k$ (the user does not transmit any information to the owner other than to request authorization), the embedded $\bar{\mathbf{G}}\mathbf{w}_k$ is also unknown to the owner. In this regard, the owner alone has no ability to frame the $k$-th user. 

Afterwards, we consider the case where the owner \textbf{colludes} with the cloud, i.e., the owner aggregates the knowledge of the cloud to launch attacks. 

\begin{thm}\label{thm2}
{\color{black}Even if the owner colludes with the cloud, the knowledge he/she aggregates from the cloud still does not allow him to frame the users.} 
\end{thm}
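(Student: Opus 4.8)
The plan is to reduce the framing capability of the owner--cloud coalition to the infeasibility of recovering the plaintext fingerprint $\mathbf{b}_k$, and then to invoke Theorem~\ref{thm1}. First I would make precise what a successful framing means: the coalition must fabricate a copy $\tilde{\mathbf{m}}^k$ whose fingerprint, as extracted by the judge through Eqs.~(\ref{Eq:mfdecoder}) or~(\ref{Eq:pidecoder}), matches an honest user's registered $\mathbf{b}_k$ held in $\mathcal{F}$. Since the detector recovers the fingerprint from the embedded signal $\bar{\mathbf{G}}\mathbf{w}_k$ with $\mathbf{w}_k=\sigma_W(2\mathbf{b}_k-1)$, producing such a copy requires the coalition to synthesize $\bar{\mathbf{G}}\mathbf{w}_k$ in the plaintext domain, which in turn requires knowledge of $\mathbf{b}_k$ itself. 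Thus it suffices to show that the aggregated view of the owner and the cloud reveals nothing about $\mathbf{b}_k$.

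Next I would enumerate the union of the two views. Beyond everything already available to the cloud in Theorem~\ref{thm1}, the owner contributes $SK_O=(b_1,b_2)$, the plaintext $\mathbf{E}$, $\mathbf{G}$, the session keys $\{SK_m\}$, and the original media. The key observation is that every one of these extra items is generated independently of the user's key pair $SK_{U_k}=(a_1,a_2)$ and of $\mathbf{b}_k$; the only fingerprint-dependent objects anywhere in the system are the ciphertexts $E^2_{PK_{U_k}}(\mathbf{b}_k)$, $E^1_{PK_{U_k}}(\mathbf{b}_k)$, and $E^1_{PK_J}(\mathbf{b}_k)$, all encrypted under the user's or the judge's public key. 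Consequently the entire coalition view can be simulated from the cloud's view of Theorem~\ref{thm1} together with freshly and independently sampled owner secrets, so any advantage in recovering $\mathbf{b}_k$ would already be an advantage for the cloud acting alone.

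The substantive step is verifying that the extra secret $SK_O$ and the re-encryption keys do not combine to leak $SK_{U_k}$. Here I would check the only nontrivial algebraic relation available: from $rk_{O\to U_k}=g^{b_1a_2}$ and the known $b_1$ the coalition can form $(g^{b_1a_2})^{1/b_1}=g^{a_2}$, but this is already public inside $PK_{U_k}$, so it yields nothing new; likewise $rk_{U_k^2\to U_k^1}=g^{a_1a_2}$ and $rk_{U_k\to J}=g^{a_1c_2}$ only expose $a_1$ and $a_2$ in the exponent, so isolating either scalar is exactly the discrete logarithm problem, and decrypting the level-one ciphertext $E^1_{PK_{U_k}}(\mathbf{b}_k)=(Z^{a_2r'},Z^{\mathbf{b}_k}Z^{r'})$ without $a_2$ is ruled out by the eDBDH-based indistinguishability established in Theorem~\ref{thm1}. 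I expect this ``no useful combination'' argument to be the main obstacle, since one must rule out all ways the coalition might pair the re-encryption keys with $b_1,b_2$ through $e(\cdot,\cdot)$; each such pairing output lives in $G_2$ and again only yields $\mathbf{b}_k$ masked by an unknown exponent, so no decryption succeeds.

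Finally I would conclude that, being unable to obtain $\mathbf{b}_k$, the coalition cannot compute $\bar{\mathbf{G}}\mathbf{w}_k$ and hence cannot embed it into any content; the owner--cloud collusion is therefore no more powerful than the lone owner in the underlying TTP-free AFP of~\cite{bianchi2014ttp}, where framing is already impossible precisely because the owner never learns the fingerprint. This establishes the claimed reduction to the TTP-free AFP for both CREAMS-I and CREAMS-II.
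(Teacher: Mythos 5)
Your overall strategy coincides with the passive-attack half of the paper's proof: reduce framing to recovering the user's fingerprint material, invoke Theorem~\ref{thm1} on the coalition's aggregated view, and observe that once the owner and cloud are merged into one entity the schemes collapse to the TTP-free AFP of Bianchi \textit{et al.} Your explicit algebraic check that $SK_O=(b_1,b_2)$ combined with $rk_{O\to U_k}=g^{b_1a_2}$, $rk_{U_k^2\to U_k^1}=g^{a_1a_2}$ and $rk_{U_k\to J}=g^{a_1c_2}$ yields nothing beyond the already-public $g^{a_2}$ is in fact more detailed than anything the paper writes. However, there are two genuine gaps. First, your view enumeration is wrong as stated: the claim that the only fingerprint-dependent objects in the system are $E^2_{PK_{U_k}}(\mathbf{b}_k)$, $E^1_{PK_{U_k}}(\mathbf{b}_k)$ and $E^1_{PK_J}(\mathbf{b}_k)$ omits $E^1_{PK_{U_k}}(\mathbf{D}_k(t))$, which the cloud itself computes and holds in both schemes, and, in CREAMS-II, also $E^1_{PK_{U_k}}(\mathbf{m}^k)$. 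This is not cosmetic, because your reduction ``framing $\Rightarrow$ knowledge of $\mathbf{b}_k$'' is thereby not established: the paper explicitly notes that knowledge of $\mathbf{D}_k$ alone suffices to frame (the owner knows $\mathbf{E}$, so $\mathbf{D}_k=-\mathbf{E}+\mathbf{G}\mathbf{w}_k$ immediately yields the embeddable watermark), and, more simply, the coalition could frame by acquiring and redistributing $\mathbf{m}^k$ itself without ever synthesizing $\bar{\mathbf{G}}\mathbf{w}_k$. The paper closes this latter route by observing that in CREAMS-I the joint decryption and fingerprinting happens locally at the user, so $\mathbf{m}^k$ never exists outside the user's machine, while in CREAMS-II the ciphertext $E^1_{PK_{U_k}}(\mathbf{m}^k)$ held by the cloud is again covered by Theorem~\ref{thm1}. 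Your framework can absorb this fix, since every omitted object is still a first-level ciphertext under $PK_{U_k}$, but as written the key step fails.

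Second, you treat only the passive coalition, whereas the paper's proof has a second half ruling out \emph{active} attacks: the owner cannot mount rogue interactions to harvest fingerprint data because the cloud, being honest-but-curious, does not deviate from the protocol, and the owner--user interaction is restricted to a single authorization round, any additional request causing the user to abort. Without an argument of this kind, your conclusion ``the aggregated view reveals nothing about $\mathbf{b}_k$'' covers only protocol-compliant executions and does not yet give the theorem. A minor further point: your simulation claim (that the coalition view can be generated from the cloud's view of Theorem~\ref{thm1} plus freshly sampled owner secrets) is inconsistent as stated, since the cloud's view already contains objects bound to the \emph{actual} owner keys, such as $E^2_{PK_O}(\mathbf{E}(t))$ and $rk_{O\to U_k}$; fortunately your direct algebraic verification makes that simulation argument dispensable.
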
 

\begin{proof}
We start by analyzing the situation where the owner attempts to passively attack with information gathered from the cloud. There are two ways for the owner to frame the $k$-th user, one is to embed the $k$-th user's fingerprint into any media content, which requires the knowledge of $\mathbf{b}_k$ or $\mathbf{D}_k$, and the other is to directly obtain $\mathbf{m}_k$. However, as \textbf{Theorem~\ref{thm1}} shows, the cloud cannot obtain any plaintext information about the $k$-th user's personal fingerprint and D-LUT in both schemes. In fact, the cloud also cannot obtain any plaintext information about the watermarked content $\mathbf{m}^k$ of the $k$-th user, because in FairCMS-I, the joint decryption and fingerprinting process occurs locally to the user, and in FairCMS-II, the security of $E^1_{PK_{U_k}}(\mathbf{m}^k)$ is guaranteed by \textbf{Theorem~\ref{thm1}}. {\color{black}Therefore, passive attacks cannot work even with the aggregated knowledge of the owner and the cloud.} 

We then proceed to analyze the situation of active attack. In fact, the owner cannot launch any effective active attacks (the cloud does not cooperate due to the assumption honest-but-curious), because the direct interaction between the owner and the user in both schemes is strictly limited (there is only one round in the authorization phase), and any other interaction request trying to obtain additional data will be deemed illegal by the user and abort the sharing process.
\end{proof}

\section{Efficiency Analysis}
\label{Sec:EfficiencyAnalysis}

In this section, we evaluate the efficiency of FairCMS-I and FairCMS-II and compare the results with the efficiency of user-side embedding AFP to demonstrate that these two schemes can be regarded as a secure outsourcing of AFP.

{\color{black}To make a fair comparison, we assume that the computational complexity of modular exponentiation, bilinear pairing and discrete logarithm are $O(\alpha)$, $O(\beta)$ and $O(\gamma)$, respectively. Moreover,} we assume that the computational overhead caused by plain-field multiplication and addition operations is negligible compared to that of the encrypted domain. {\color{black}In addition, the data expansion rate resulting from PRE encryption is $\delta$, i.e., for the media content $\mathbf{m}$ of size $M$,} its ciphertext is of size $O(\delta M)$. We further assume that the number of the users is $K$. For ease of presentation, {\color{black}the additive homormorphic encryption employed in the user-side embedding AFP \cite{bianchi2014ttp} is assumed to be the lifted-ElGamal encryption.}
The analyzed results are summarized in Table~\ref{tab:efficiency}, in which three cost cases including the computation cost, the communication cost, and the storage cost are considered. 

\begin{table*}[ht]
\centering
\setlength{\tabcolsep}{3.5mm}
\renewcommand\arraystretch{1.2}
\caption{Efficiency Analysis Results}
\label{my-Table1}
\begin{threeparttable}
\begin{tabular}{c|c|c|c|c}
\hline
\multicolumn{2}{c|}{} & FairCMS-I & FairCMS-II & AFP    
\\ \hline
\multirow{3}{*}{Computional Cost} & Owner & $O(K\alpha)$ & $O(K\alpha)$ & $O(K(L+1)T\alpha)$ 
\\ \cline{2-5} 
& Cloud & $O(KTL\alpha+K(T+2L)\beta)$ & $O(K(L+M)T\alpha+K(T+2L+M)\beta)$ & {\color{black}$\circ$\tnote{*}}  
\\ \cline{2-5} 
& User  & $O((T+L+2)\alpha+T\gamma)$ & $O((M+L+2)\alpha+M\gamma)$ & $O((T+L)\alpha+T\gamma)$                          
\\ \hline
\multirow{3}{*}{Communication Cost} & Owner & $O(M+T(L+\delta)+K)$ & $O(\delta M+T(L+\delta)+K)$ & $O(K(\delta L+\delta T+M))$ 
\\ \cline{2-5} 
& Cloud & {\color{black}$+$\tnote{*}} & {\color{black}$+$\tnote{*}} & {\color{black}$\circ$\tnote{*}}     
\\ \cline{2-5} 
& User  & $O(\delta(T+L)+M+2)$ & $O(\delta(M+L)+2)$ & $O(\delta(L+T)+M)$
\\ \hline
\multirow{3}{*}{Storage Cost} & Owner & $O(M)$ & $O(M)$ & $O(M(L+2)+K\delta L)$     
\\ \cline{2-5} 
& Cloud & $O(M(L+1)+K\delta L)$  & $O(M(L+\delta)+K\delta L)$  & {\color{black}$\circ$\tnote{*}}
\\ \cline{2-5} 
& User  & {\color{black}$\circ$\tnote{*}}  & {\color{black}$\circ$\tnote{*}}  & {\color{black}$\circ$\tnote{*}}              
\\ \hline
\end{tabular}
\begin{tablenotes}
\footnotesize
{\color{black}\item[*] `$+$' means that the communication cost of the cloud is the sum of the communication costs of the owner and all $K$ users. `$\circ$' means that the storage cost is zero.}
\end{tablenotes}
\end{threeparttable}
\label{tab:efficiency}
\vspace{-2pt}
\end{table*}

The computation cost is analyzed first. 
For both FairCMS-I and FairCMS-II, the cost of \textbf{Parts 1} and \textbf{3} is unconcerned, as they are performed offline only before the media sharing trail begins and when a copyright dispute occurs, respectively. 
Conversely, \textbf{Part 2}, which requires online interaction among the owner, users and the cloud, is the main burden of the operational efficiency of media sharing protocol.
In \textbf{Part 2} of FairCMS-I, the owner only costs $O(K\alpha)$ to produce re-encryption keys for all $K$ users. 
The cloud needs to spend $O(K(T+2L)\beta)$ to perform the re-encryption operations and spend $O(KTL\alpha)$ to calculate the D-LUTs in the ciphertext domain, where $T$ is the length of the LUTs and $L$ is the length of the users' fingerprints. 
The user needs to spend $O((L+2)\alpha)$ to encrypt his/her fingerprint and calculate the re-encryption keys, and spend $O(T\alpha+T\gamma)$ to decrypt his/her personalized D-LUT.
Similarly, in \textbf{Part 2} of FairCMS-II, the case of the owner’s side is the same as that of FairCMS-I, so the computation cost is still $O(K\alpha)$. 
{\color{black}In FairCMS-II, the cloud’s cost increases $O(KMT\alpha+KM\beta)$, and the cost to the user changes to $O((M+L+2)\alpha+M\gamma)$, where $M$ is the size of the required media content.} This is because in FairCMS-II, the joint decryption and fingerprinting operation is outsourced to the cloud who carries out all computations in the ciphertext domain. In the client-side embedding AFP scheme, the owner is responsible for encrypting the E-LUT and computing the D-LUTs online for each user in the encryption domain, which costs $O(K(L+1)T\alpha)$. The user needs to encrypt his/her own fingerprint and decrypt the encrypted D-LUTs, which costs $O((T+L)\alpha+T\gamma)$. 

{\color{black}The communication cost is then analyzed. In FairCMS-I, the owner sends the LUT encrypted media content $\mathbf{c}$ to the cloud with a communication cost of $O(M)$.} The user receives the public-key encrypted D-LUT and the LUT encrypted media content sent from the cloud, which costs $O( \delta T+M)$. We ignore the overhead of transmitting {$\mathbf{B}^{m}$} because it can be generated locally on all parties by sharing the same session key $SK_m$. {\color{black}In FairCMS-II, the media content that the owner deliver is encrypted with the public-key encryption, hence the communication cost is $O( \delta M)$.} Similarly, the user receives the fingerprint embedded media content encrypted with public-key cryptography at a cost of $O(\delta M)$. In addition, the owner needs to transmit $\mathbf{G}$, $E_{P{K_O}}^{2}({\mathbf{E}}(t))$ and the re-encryption keys in both schemes, which costs $O(T(L+\delta)+K)$. In the proposed two schemes, the user also needs to spend $O(\delta L+2)$ to send the encrypted fingerprint and the re-encryption keys. {\color{black}Note that the communication overhead of the cloud is the sum of the communication overheads of the owner and all $K$ users, which we mark as ``$+$" in Table~\ref{tab:efficiency}. Accordingly, in the AFP scheme, the communication costs of the owner and the user are $O(K(\delta L+\delta T +M))$ and $O( \delta (L+T) + M)$, respectively.}

At last, the storage cost is analyzed. {\color{black}We consider only the storage cost of media  contents and proof materials, including the original media content, the encrypted media content, the secret matrices, as well as the encrypted users' fingerprints. In the user-side embedding AFP scheme, these materials and contents are all stored by the owner, which costs $O(M(L$ $+2)+K\delta L)$. In the proposed schemes, the owner only costs $O(M)$ to store the original media content,} by transferring the remaining storage requirements to the cloud. Correspondingly, in FairCMS-I and FairCMS-II, the cloud requires the storage overhead of $O(M(L+1)+K\delta L)$ and $O(M(L+\delta)+K\delta L)$ respectively. {\color{black}The reason for the difference in storage overhead lies in the different encryption forms of the stored encryption media content. The user does not participate in the management of the proof materials and the encrypted media content,} thereby the storage overhead is zero, which is marked as ``$\circ$" in Table~\ref{tab:efficiency}.

The results in Table~\ref{tab:efficiency} confirms the two assertions we made earlier. For one thing, compared with the AFP scheme, it is easy to see from Table~\ref{tab:efficiency} that the owner-side computing, communication, and storage costs are reduced in both proposed schemes. 
In the AFP scheme, the cost of the owner increases linearly as the number of users $K$ increases. 
However, in FairCMS-I and FairCMS-II, this linear-increase relationship is eliminated, i.e., the owner merely prepares an encryption key for the newly joined user.
{This is particularly important, as we mentioned in Section~\ref{subsec:afp}, because only in this way, the increase in the number of users will not overload the cost of the owner, which is a major problem faced in the AFP scheme.} In other words, the owner can achieve significant cost savings by renting the cloud's resources, thereby enabling large-scale media sharing. 
Moreover, the larger the scale of media sharing, the more computing savings the owner can get, which fits exactly with the idea of computing outsourcing. Based on the above reasons, we conclude that both FairCMS-I and FairCMS-II can be regarded as the privacy-preserving outsourcing of user-side embedding AFP. 

For another, since $M>T>L$ and ${\delta}>1$, it is {intuitive} from Table~\ref{tab:efficiency} that in FairCMS-II, the cloud has increased in computing and storage costs compared to that of FairCMS-I. {\color{black}In fact, the cloud-side communication cost of also increases in FairCMS-II as the uniqueness of each media content copy used for distribution hinders the use of network-level bandwidth saving mechanisms such as multicasting and caching. This means that FairCMS-I is more efficient at the cost of lacking IND-CPA security.} Therefore, the trade-off between security and efficiency is evident.

Finally, we analyze the necessary efforts required from the user to obtain the protected content. As can be seen from Table~\ref{tab:efficiency}, {\color{black}compared with the AFP scheme, the computational cost and communication cost of the user in FairCMS-I only increase constant terms $O($2$\alpha)$ and $O($2$)$ respectively, which is negligible.} FairCMS-II has an increase in user-side overhead compared to FairCMS-I, because in FairCMS-II, the user transmits and decrypts the media content while in FairCMS-I, the user transmits and decrypts the D-LUTs. Nevertheless, this 
user-side efficiency level is substantially the same as the current mainstream media sharing schemes \cite{zhang2018you,frattolillo2019multiparty,xia2016privacy,frattolillo2016buyer}, {\color{black}which all require users to decrypt the received encrypted media content with their own private key.}

\section{Experiment}
\label{sec:simulation}
{\color{black}In this section, we conduct simulation experiments to evaluate the performance of the two proposed schemes in terms of visual effect, perception quality, success tracing rate, and efficiency.}

\subsection{Parameter Setting}
\label{subsec:par}
{\color{black}We use four gray scale images as the shared media content (excluding the experiments in Figs. \ref{fig13}(b), \ref{fig14}, and \ref{fig15}(b), as well as Table \ref{my-Table5}), each with a resolution of $512 \times 512$ pixels.} As shown in Fig.~\ref{fig:aspecost1}, the contents of the four images are Baboon, Pirate, Lena, and Car, respectively. For each image, the vector ${\mathbf{m}}$ is obtained by performing $8 \times 8$ discrete cosine transform (DCT) on the image and arranging the DCT coefficients in zig-zag order. We set the variance ${\sigma_E}^2$ of a E-LUT as $10^6$, the length $T$ of LUTs as $1000$, the length $L$ of fingerprints as $50$, the parameter $S$ as $4$, and the number $K$ of users as $200$. {\color{black}Moreover, to support encryption algorithm and homomorphic operations, E-LUT, $\sigma_W$, and ${\mathbf{m}}$ is quantized via the fixed point representation manner \cite{bianchi2014ttp}.} Specifically, only the E-LUT and $\sigma_W$ are quantized in the AFP scheme and FairCMS-I, and ${\mathbf{m}}$ is additionally quantized in FairCMS-II. In the experiments, the quantization is done with $15$ bits for the magnitude part and $5$ bits for the fractional part. {The performance test of watermarking is implemented by Matlab, and the efficiency test of cryptography is implemented by JavaScript and Python.} The experimental equipment is a desktop computer with an Intel Core i$7-8700$ processor and $16$ GB of RAM.

\begin{figure}[ht]
\centering
\begin{minipage}{0.24\linewidth}
\vspace{3pt}
\centerline{\includegraphics[width=\textwidth]{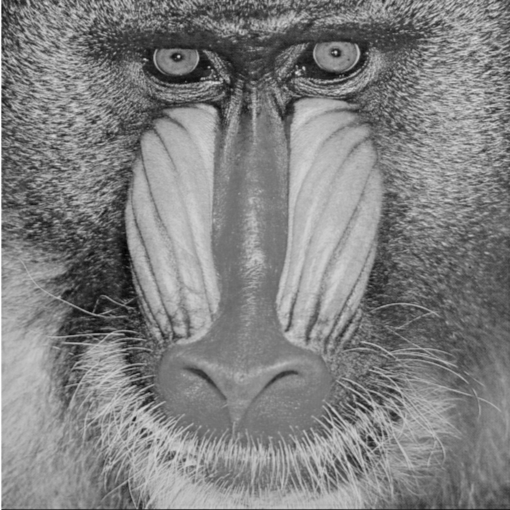}}
\centerline{(a)}
\end{minipage}
\begin{minipage}{0.24\linewidth}
\vspace{3pt}
\centerline{\includegraphics[width=\textwidth]{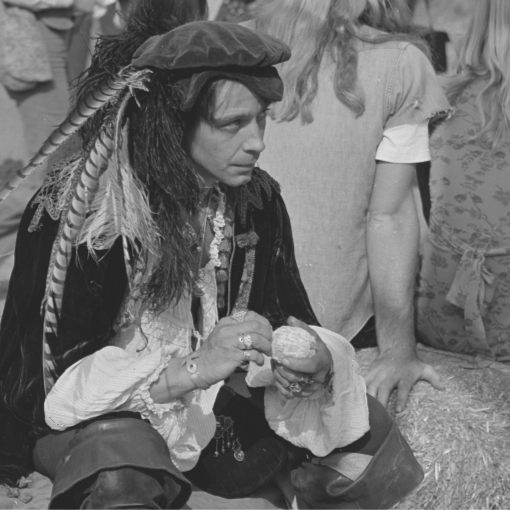}}
\centerline{(b)}
\end{minipage}
\begin{minipage}{0.24\linewidth}
\vspace{3pt}
\centerline{\includegraphics[width=\textwidth]{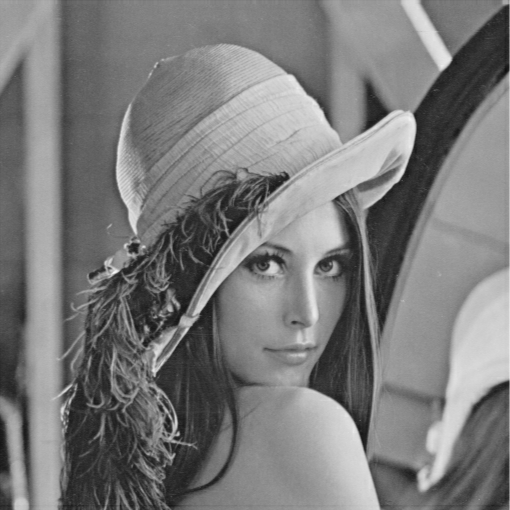}}
\centerline{(c)}
\end{minipage}
\begin{minipage}{0.24\linewidth}
\vspace{3pt}
\centerline{\includegraphics[width=\textwidth]{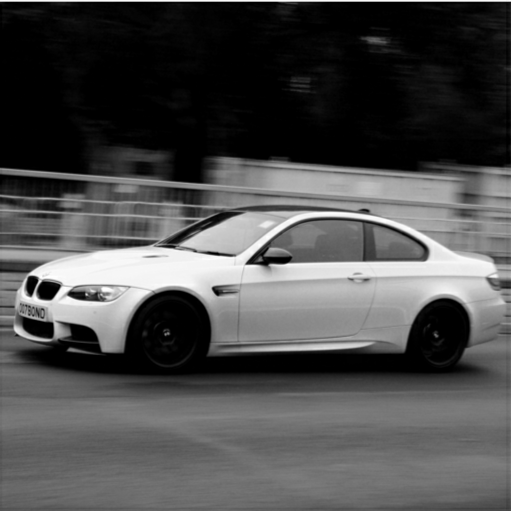}}
\centerline{(d)}
\end{minipage}
\caption{Four test images with a resolution of $512 \times 512$ pixels. The contents of (a)-(d) are Baboon, Pirate, Lena, and Car, respectively.}
\label{fig:aspecost1}
\vspace{-2pt}
\end{figure}

\subsection{{\color{black}Visual Effect}}
{\color{black}We first observe the effect of single-value alteration encryption and fingerprint embedding from a visual point of view.} We set the standard deviation ${\sigma_W}$ of W-LUT as $0.3$ and use  Fig.~\ref{fig:aspecost1}(a) to simulate the AFP scheme. From this, we obtain the corresponding E-LUT-encrypted image and the fingerprint embedded image, as shown in Fig.~\ref{fig:aspecost2}. The peak signal-to-noise ratio (PSNR) of the encrypted picture is $5.751$dB, and it can be seen from Fig.~\ref{fig:aspecost2}(a) that the single-value alteration encryption does not reveal any visible information of the original image. The PSNR of the decrypted and fingerprint embedded image is $50.952$dB, and its difference to the original image is not visible for naked eyes. Therefore, with appropriate selection of system parameters, AFP does not cause perception quality concerns.

\begin{figure}[ht]
\centering
\begin{minipage}{0.3\linewidth}
\vspace{3pt}
\centerline{\includegraphics[width=\textwidth]{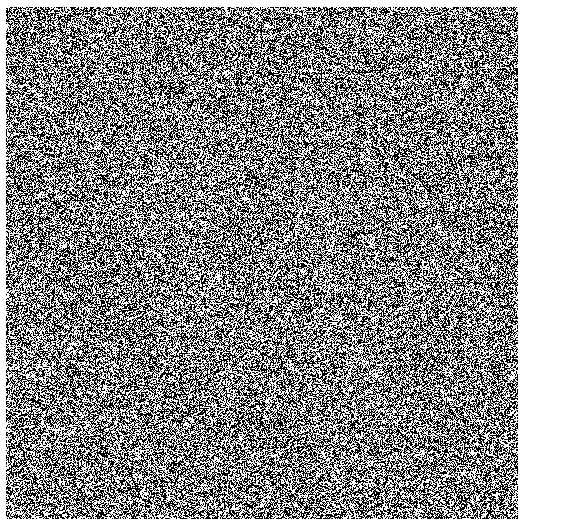}}
\centerline{(a)}
\end{minipage}
\begin{minipage}{0.3\linewidth}
\vspace{3pt}
\centerline{\includegraphics[width=\textwidth]{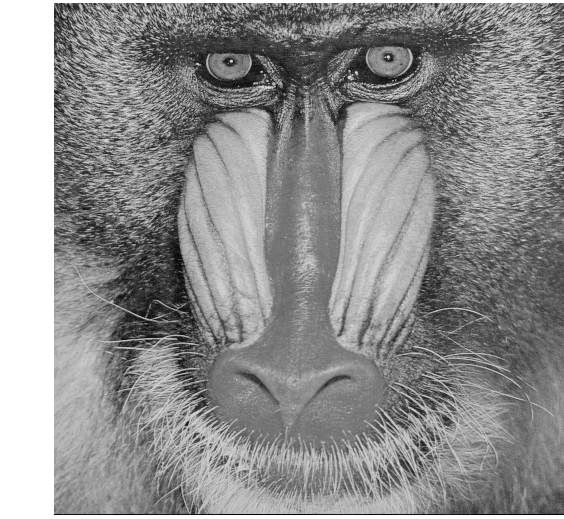}}
\centerline{(b)}
\end{minipage}
\caption{The encrypted image (PSNR $= 5.751$dB) and fingerprint embedded image (PSNR $= 50.952$dB) corresponding to Fig. \ref{fig:aspecost1}(a).}
\label{fig:aspecost2}
\vspace{-2pt}
\end{figure}

\begin{table}[htbp]
\centering
\setlength{\tabcolsep}{10mm}
\renewcommand\arraystretch{1.6}
 \scriptsize
\caption{Comparison of the PSNR (in dB) against different ${\sigma_W}$}
\label{my-Table2}
\setlength{\tabcolsep}{0.78mm}
\begin{tabular}{c|c|c|c|c|c|c}
\hline
  Method & ${\sigma_W}$=0.1 & ${\sigma_W}$=0.12 & ${\sigma_W}$=0.15 & ${\sigma_W}$=0.2 & ${\sigma_W}$=0.25 & ${\sigma_W}$=0.3 \\
 \hline
AFP/FairCMS-I & 67.895 & 61.150 & 57.512 & 55.385 & 52.843 & 51.244 \\ 
   \hline
 FairCMS-II & 68.009 & 61.183 & 57.538 & 55.401 & 52.866 & 51.241 \\
  \hline
\end{tabular}
\vspace{-2pt}
\end{table}

\begin{figure}[ht]
\centering
\includegraphics[scale=0.5]{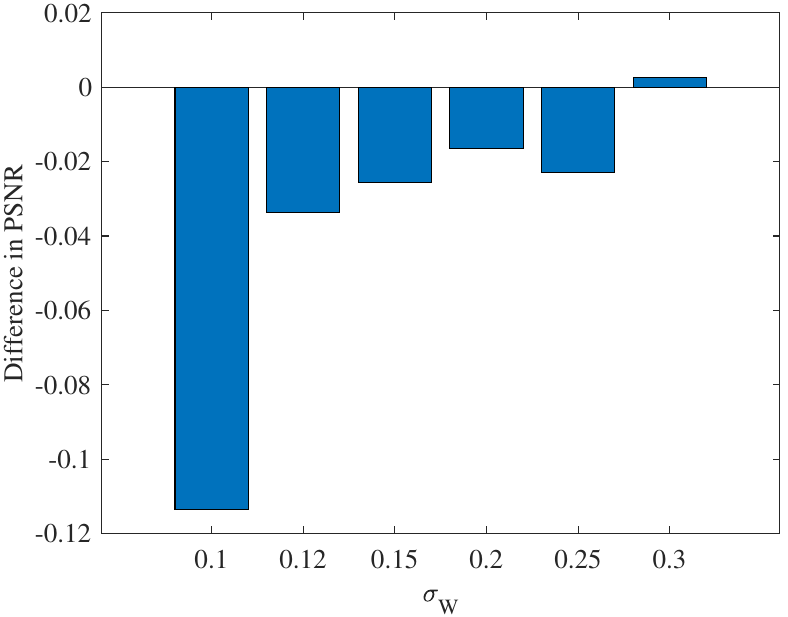}
\caption{The difference between AFP/FairCMS-I and FairCMS-II in PSNR.}
\label{fig9}
\vspace{-2pt}
\end{figure}

\begin{figure}[ht]
\centering
\includegraphics[scale=0.5]{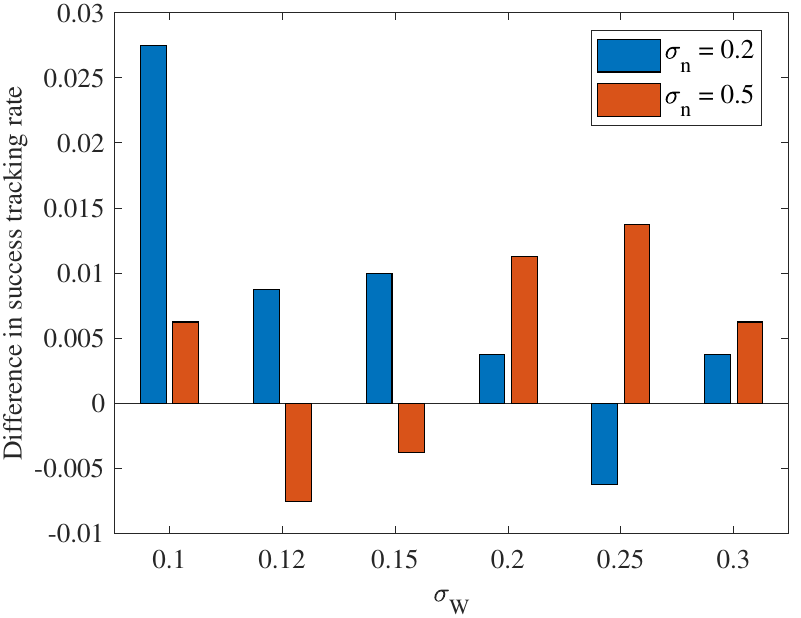}
\caption{The difference between AFP/FairCMS-I and FairCMS-II in success tracing rate.}
\label{fig10}
\vspace{-2pt}
\end{figure}

\subsection{Perception Quality}
\label{subsec:pquality}
Because the standard deviation ${\sigma_W}$ of W-LUT determines the intensity of fingerprint embedding, it directly affects the perception quality of the embedded image. Here we fix other parameters, such as the quantization bits, and investigate the effect of different values of ${\sigma_W}$ on the perception quality of the embedded image, which is again measured by the value of PSNR. 
{\color{black}In FairCMS-I, D-LUTs are computed in the ciphertext domain homomorphically, while the media encryption operation as well as the joint decryption and fingerprinting opera-tion are completed in the plaintext domain, which is exactly the same as in the AFP scheme.} As a result, the performance of FairCMS-I and the AFP scheme is theoretically consistent in terms of perception quality and traitor tracing. 
Table \ref{my-Table2} shows the comparison results between AFP/FairCMS-I and FairCMS-II in terms of perception quality. The experimental results are averaged over four tests, each using the image in Fig.~\ref{fig:aspecost1} in turn. As can be seen from Table \ref{my-Table2}, the smaller ${\sigma_W}$ is, the better the image perception quality. Fig. \ref{fig9} visualizes the difference between AFP/FairCMS-I and FairCMS-II in PSNR. As shown in Fig. \ref{fig9}, the PSNR values of FairCMS-II are slightly better than that of AFP/FairCMS-I under most settings. Compared with Fig. \ref{fig:aspecost2}, it can be concluded that all the listed PSNR values are high enough for usage in practical applications. Note that the watermark embedding here is not limited to working in the DCT domain. A suitable transform domain can be chosen according to the media type, such as the dual-tree complex wavelet transform domain for video \cite{huan2021exploring} and the Fourier transform domain for audio \cite{salah2021fourier}, {\color{black}to maximize the perception quality of the media content.}

\subsection{Success Tracing Rate}
\label{subsec:suc}
With other parameters such as the number of quantization bits being fixed, the success rate of traitor tracing is dominated by fingerprint embedding strength and noise power. We assume that the additive noise ${\mathbf{n}}$ obeys a Gaussian distribution with variance ${\sigma_n}^2$ and mean $0$. The matched filter decoder is selected for fingerprint detection since it has a fairly obvious suppression effect on Gaussian noise. Here we investigate the success rate of traitor tracing of AFP/FairCMS-I and FairCMS-II under different values of ${\sigma_W}$ and ${\sigma_n}$. Table \ref{my-Table3} presents the average results of four tests as in the previous subsection. In each test, we consider copyright infringement of each user and track that user. If a user's fingerprint is extracted without any errors, it is counted as a successful trace, otherwise it is counted as a failed trace. The success tracing rate is recorded as the number of successful traces divided by the total number of traces. {\color{black}As presented in Table \ref{my-Table3}, the increase of $\sigma_W$ and $\sigma_n$ has a positive and negative effect on the successful tracing rate as a whole, respectively.} Fig. \ref{fig10} visualizes the difference between AFP/FairCMS-I and FairCMS-II in success tracing rate. {\color{black}As shown in Fig. \ref{fig10}, compared with AFP/FairCMS-I, the success tracing rate of FairCMS-II decreases slightly in most cases. Nevertheless, this does not bother as the decrease is basically less than $0.015$.} 

\begin{table*}[htbp]
\centering
\caption{Comparison of the success tracing rate against different ${\sigma_n}$ and ${\sigma_W}$}
\label{my-Table3}
\setlength{\tabcolsep}{1.3mm}
\renewcommand\arraystretch{1.4}
\begin{tabular}{c|c|c|c|c|c|c|c|c|c|c|c|c}
\hline
\multirow{2}{*}{Method} & \multicolumn{6}{c|}{${\sigma_n}$=0.2}&\multicolumn{6}{c}{${\sigma_n}$=0.5}\\
\cline{2-13}
& ${\sigma_W}$=0.1 & ${\sigma_W}$=0.12 & ${\sigma_W}$=0.15 & ${\sigma_W}$=0.2 & ${\sigma_W}$=0.25 & ${\sigma_W}$=0.3 & ${\sigma_W}$=0.1 & ${\sigma_W}$=0.12 & ${\sigma_W}$=0.15 & ${\sigma_W}$=0.2 & ${\sigma_W}$=0.25 & ${\sigma_W}$=0.3 \\
\hline
AFP/FairCMS-I & 0.4888 & 0.6463 & 0.8775 & 0.9525 & 0.9775 & 0.9788 & 0.3275 & 0.6400 & 0.8488 & 0.9475 & 0.9775 & 0.9800\\ 
\hline
FairCMS-II & 0.4613 & 0.6375 & 0.8675 & 0.9488 & 0.9838 & 0.9750 & 0.3213 & 0.6475 & 0.8525 & 0.9363 & 0.9638 & 0.9738\\
\hline
\end{tabular}
\vspace{-2pt}
\end{table*}

\begin{figure}[ht]
\centering
\includegraphics[scale=0.5]{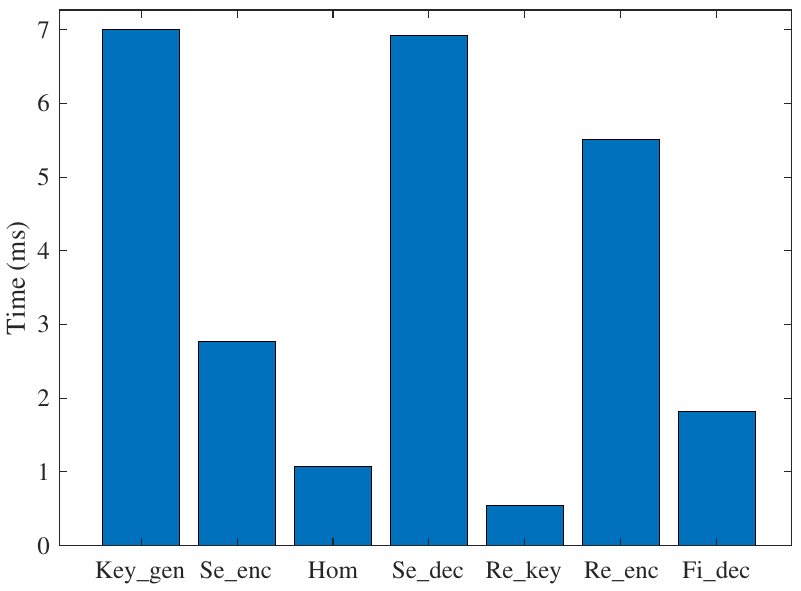}
\caption{{{\color{black}The computation time taken to perform each algorithm.} Key{\_}gen, Se{\_}enc, Hom, Se{\_}dec, Re{\_}key, Re{\_}enc, and Fi{\_}dec refer to the time consumed by key generation, second-level encryption, homomorphic operation, second-level decryption, generation of re-encryption key, re-encryption, and first-level decryption, respectively.}}
\label{fig11}
\vspace{-2pt}
\end{figure}

\begin{figure}[ht]
\centering
\includegraphics[scale=0.4]{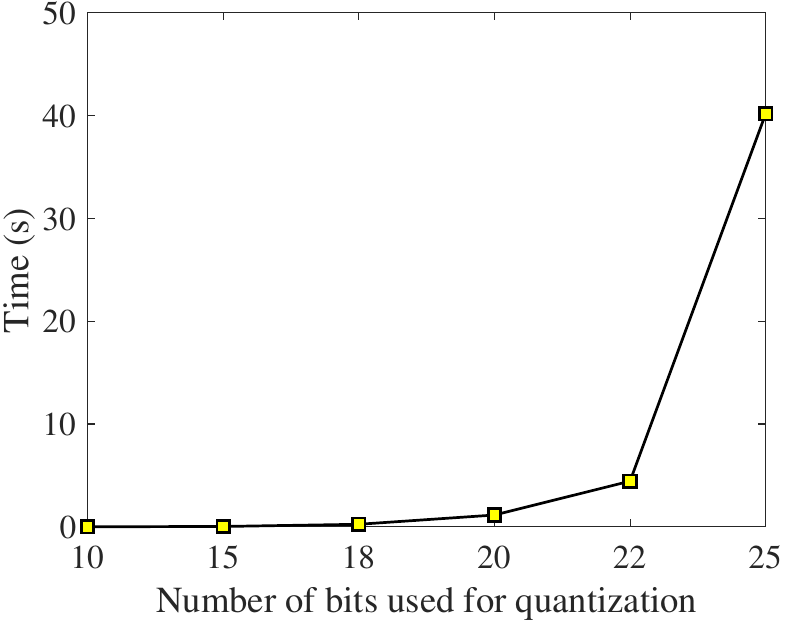}
\caption{{The time taken to solve discrete logarithm using the lookup table method \cite{CHEN2022} when different bits are used to quantize the plaintext.}}
\label{fig12}
\vspace{-2pt}
\end{figure}

\subsection{{Efficiency}}
\label{subsec:eff}
{\color{black}First, to evaluate the efficiency of implementing privacy-protected access control,} we test the time cost of each step in the lifted-ElGamal based PRE, and the results are presented in Figs. \ref{fig11} and \ref{fig12}. In the experiment, we adopt the lookup table method \cite{CHEN2022} to solve the discrete logarithm problem. {\color{black}In this method, a hash table that records all $(Z^m,m)$ pairs needs to be created first.} In this way, $m$ can be recovered from $Z^m$ only by querying the hash table during decryption, and the time consumed by the querying operation is negligible. The main time overhead lies in the establishment of the hash table. Fig. \ref{fig12} shows the time consumed in the establishment of the hash table when the plaintext is quantized by different bits. As can be seen from Fig. \ref{fig12}, when the plaintext is quantized to $20$ bits, the time cost is less than $2$ seconds. {\color{black}Note that the hash table only needs to be created once no matter how much data needs to be encrypted and decrypted.} As a result, solving the discrete logarithm problem has little effect on the efficiency of FairCMS-I and FairCMS-II. Fig. \ref{fig11} shows the efficiency performance of other steps except solving discrete logarithm. {\color{black}As can be seen from Fig. \ref{fig11}, all the steps can be completed in milliseconds and the relatively more time-consuming steps are those that require calculating bilinear pairings, including key generation, second-level decryption, and re-encryption.}

\begin{figure}[ht]
\centering
\begin{minipage}{0.7\linewidth}
\vspace{3pt}
\centerline{\includegraphics[width=\textwidth]{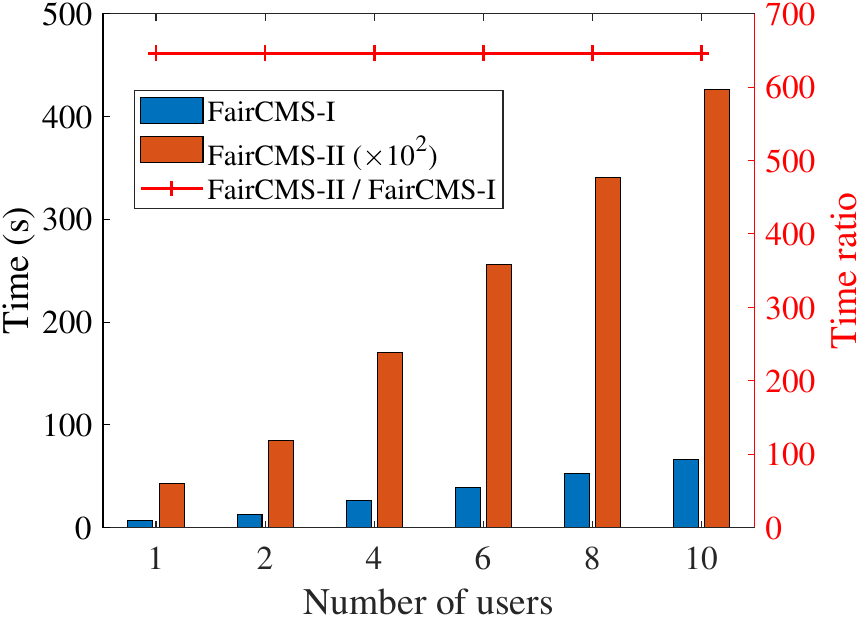}}
\centerline{(a)}
\end{minipage}
\begin{minipage}{0.7\linewidth}
\vspace{3pt}
\centerline{\includegraphics[width=\textwidth]{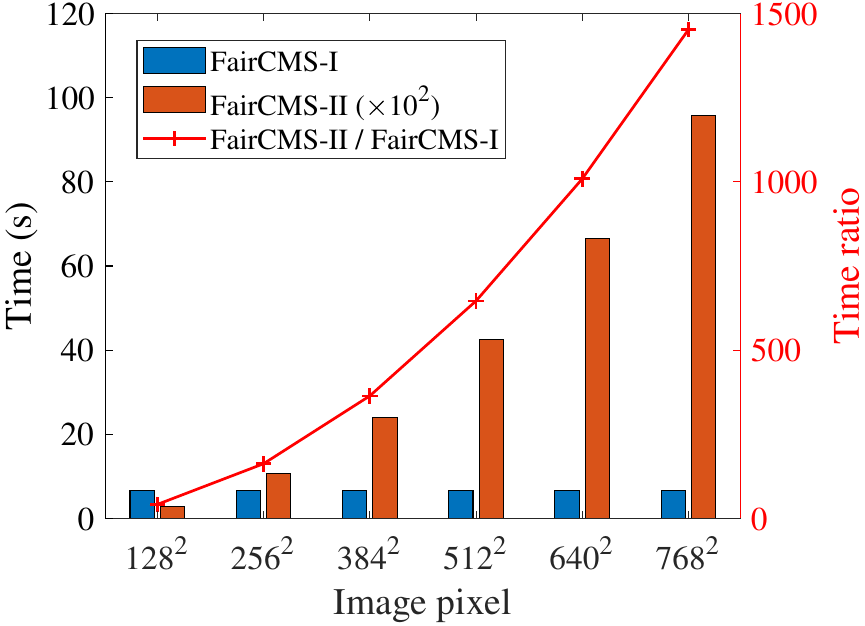}}
\centerline{(b)}
\end{minipage}
\caption{{\color{black}The comparison of cloud-side computational efficiency between FairCMS-I and FairCMS-II. The bars and polyline correspond to the left and right Y-axes, respectively. The time consumed by FairCMS-II is 100 times the reading on the Y-axis. (a) Efficiency comparison under different number of users. (b) Efficiency comparison under different image pixels.}}
\label{fig13}
\vspace{-2pt}
\end{figure}

{\color{black}Second, we compare the cloud-side efficiency of FairCMS-I and FairCMS-II, and the results are presented in Fig. \ref{fig13}. As shown therein, the cloud-side efficiency of FairCMS-I is significantly higher than that of FairCMS-II, thus validating our analysis in Section \ref{Sec:EfficiencyAnalysis}. The main reason for the cloud-side efficiency gain of FairCMS-I lies in the use of lightweight single-value alteration method to encrypt the media content, as shown in Fig. \ref{fig14}. This is the key to ensuring that the system is efficient when the size of the media content being shared (e.g., vedio) is large. The time cost of encrypting a video using the single-value alteration method is depicted in Table \ref{my-Table5}, and it is shown to be acceptably low. Therefore, we suggest that owners select FairCMS-I when the media content size is large and the security requirements is not excessively rigorous. In spite of this, there are no fixed thresholds for media content size and security requirements that can be used as a basis for recommendations regarding scheme selection.}

{\color{black}Finally, we conduct a comparative experiment to evaluate the proposed schemes against their relevant existing counterparts, and the results are displayed in Fig. \ref{fig15}. For FairCMS-I and FairCMS-II, we measure the time overhead of Part 2 as it is executed once for each user. For the other schemes, we evaluate their primary sources of overhead, namely encryption, homomorphic operations, and re-encryption operations, which also require execution once for each user. The lifted-ElGamal scheme is utilized in the evaluation for uniform comparison. As shown in Fig. \ref{fig15}, for one thing, FairCMS-I and FairCMS-II allow the owner to achieve significant computational overhead savings relative to the original AFP scheme \cite{bianchi2014ttp}. With FairCMS-I or FairCMS-II, the owner's computing overhead can be reduced to less than $0.2$ seconds, even if the number of users reaches hundreds. For another, FairCMS-I exhibits significantly higher cloud-side efficiency compared with \cite{zhang2018you,zheng2022towards,frattolillo2019multiparty}, and the cloud-side efficiency gain increases with the size of the media content. Thus, the comparison shown in Table \ref{tab:comparison} is validated.}

\begin{table}[htbp]
\centering
\setlength{\tabcolsep}{1.6mm}
\renewcommand\arraystretch{1.7}
 \scriptsize
\caption{{\color{black}The time (in second) required to encrypt a video with different frames using the single-value alteration encryption adopted by FairCMS-I}}
\begin{tabular}{c|c|c|c|c|c|c|c|c}
\hline
 {\color{black}Number of frames}  & {\color{black}600} & {\color{black}1200} & {\color{black}1800} & {\color{black}2400} & {\color{black}3000} & {\color{black}3600} & {\color{black}4200} & {\color{black}4800} \\
 \hline
{\color{black}Encryption} & {\color{black}129} & {\color{black}306} & {\color{black}461} & {\color{black}645} & {\color{black}809} & {\color{black}1012} & {\color{black}1182} & {\color{black}1406} \\ 
   \hline
{\color{black}Decryption} & {\color{black}136} & {\color{black}275} & {\color{black}484} & {\color{black}636} & {\color{black}852} & {\color{black}1066} & {\color{black}1283} & {\color{black}1455} \\
  \hline
\end{tabular}
\vspace{-2pt}
\label{my-Table5}
\end{table}

\begin{figure}[ht]
\centering
\includegraphics[scale=0.43]{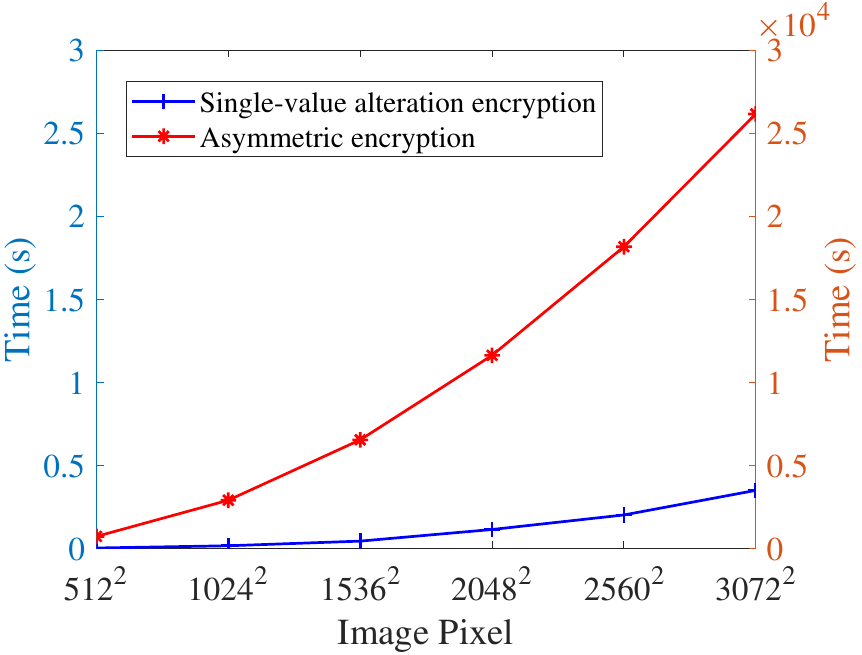}
\caption{{\color{black}Efficiency comparison between the single-value alteration encryption and the asymmetric encryption (i.e., the second-level encryption of the lifted-ElGamal based PRE).} The blue and red lines correspond to the Y-axis on the left and right, respectively.}
\label{fig14}
\vspace{-2pt}
\end{figure}

\begin{figure*}[ht]
\centering
\begin{minipage}{0.32\linewidth}
\vspace{3pt}
\centerline{\includegraphics[width=\textwidth]{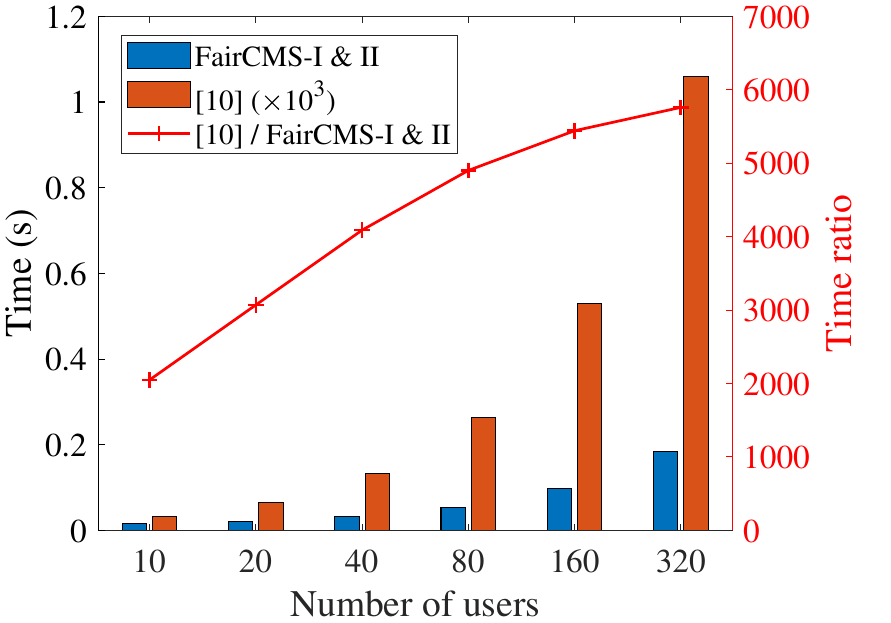}}
\centerline{(a)}
\end{minipage}
\begin{minipage}{0.315\linewidth}
\vspace{3pt}
\centerline{\includegraphics[width=\textwidth]{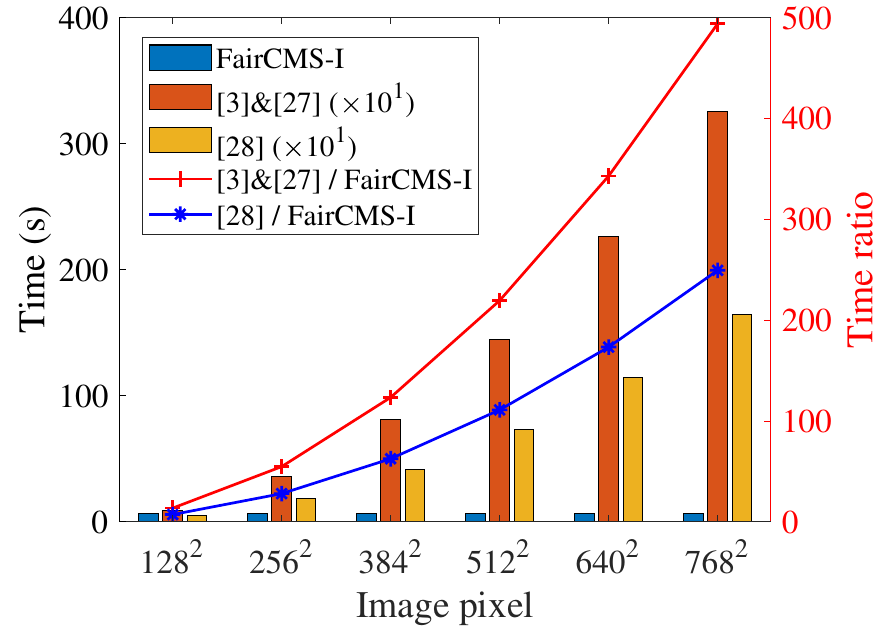}}
\centerline{(b)}
\end{minipage}
\begin{minipage}{0.335\linewidth}
\vspace{3pt}
\centerline{\includegraphics[width=\textwidth]{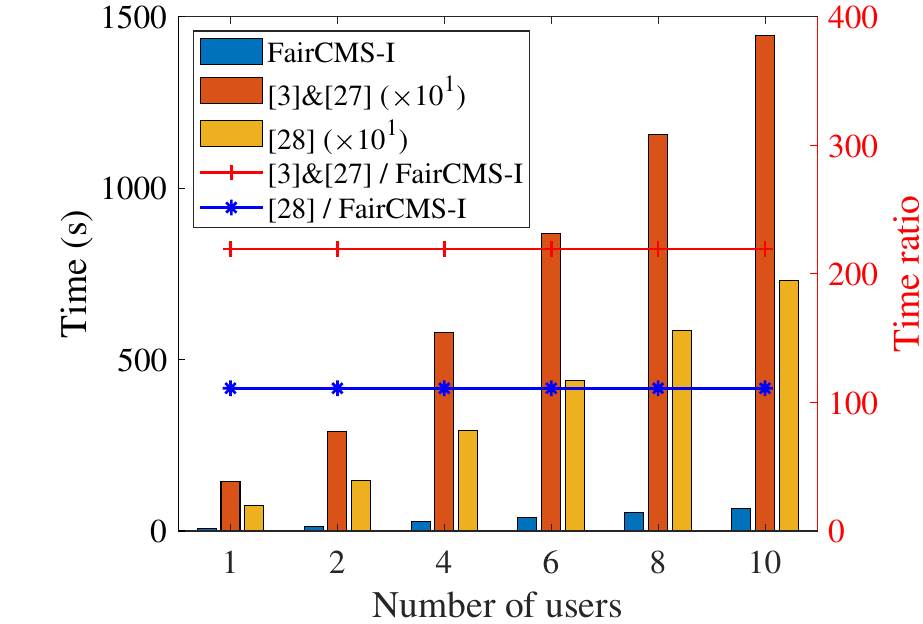}}
\centerline{(c)}
\end{minipage}
\caption{{\color{black}Comparison of computational efficiency with existing works. The bars and polyline correspond to the left and right Y-axes, respectively. (a) Efficiency comparison with the AFP scheme \cite{bianchi2014ttp} on the owner side. (b) Efficiency comparison with \cite{zhang2018you,zheng2022towards,frattolillo2019multiparty} on the cloud side under different image pixels. (c) Efficiency comparison with \cite{zhang2018you,zheng2022towards,frattolillo2019multiparty} on the cloud side under different number of users. The time consumed by \cite{bianchi2014ttp} and \cite{zhang2018you,zheng2022towards,frattolillo2019multiparty} is 1,000 times and 10 times the reading on the Y-axis, respectively.}}
\label{fig15}
\vspace{-2pt}
\end{figure*}


\section{Conclusion}
{\color{black}This paper solves the three problems faced by cloud media sharing and proposes two schemes FairCMS-I and FairCMS-II. FairCMS-I gives a method to transfer the management of LUTs to the cloud, enabling the calculation of each user's D-LUT in the ciphertext domain and its subsequent distribution. However, utilizing the single-value alteration method for masking the original media content does not achieve the IND-CPA security. Then FairCMS-II offers an enhanced privacy solution by replacing the encryption method with the lifted-ElGamal based PRE scheme, albeit at the cost of increased cloud overhead. Notably, both FairCMS-I and FairCMS-II fulfill scalability and owner-side efficiency requirements. In summary, the two proposed schemes can facilitate the media sharing of owners, while simultaneously ensuring the joint protection of copyright and users' rights, ultimately promoting the sustainable growth of the media sharing industry.}
  
\bibliographystyle{ieeetr}
\bibliography{references}

\begin{IEEEbiography}[{\includegraphics[width=1in,height=1.25in,clip,keepaspectratio]{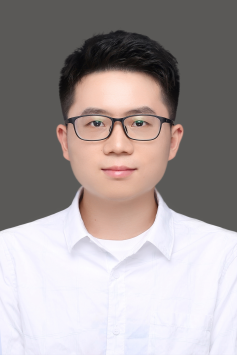}}]{Xiangli Xiao}
received the B.E. degree from the College of Electronic and Information Engineering, Southwest University, Chongqing, China, in Jun. 2020. He is currently pursuing the Ph.D. degree in the College of Computer Science and Technology, Nanjing University of Aeronautics and Astronautics, Nanjing, China. His current research interests include multimedia security, digital watermarking, blockchain, and cloud computing security.
\end{IEEEbiography}

\begin{IEEEbiography}[{\includegraphics[width=1in,height=1.25in,clip,keepaspectratio]{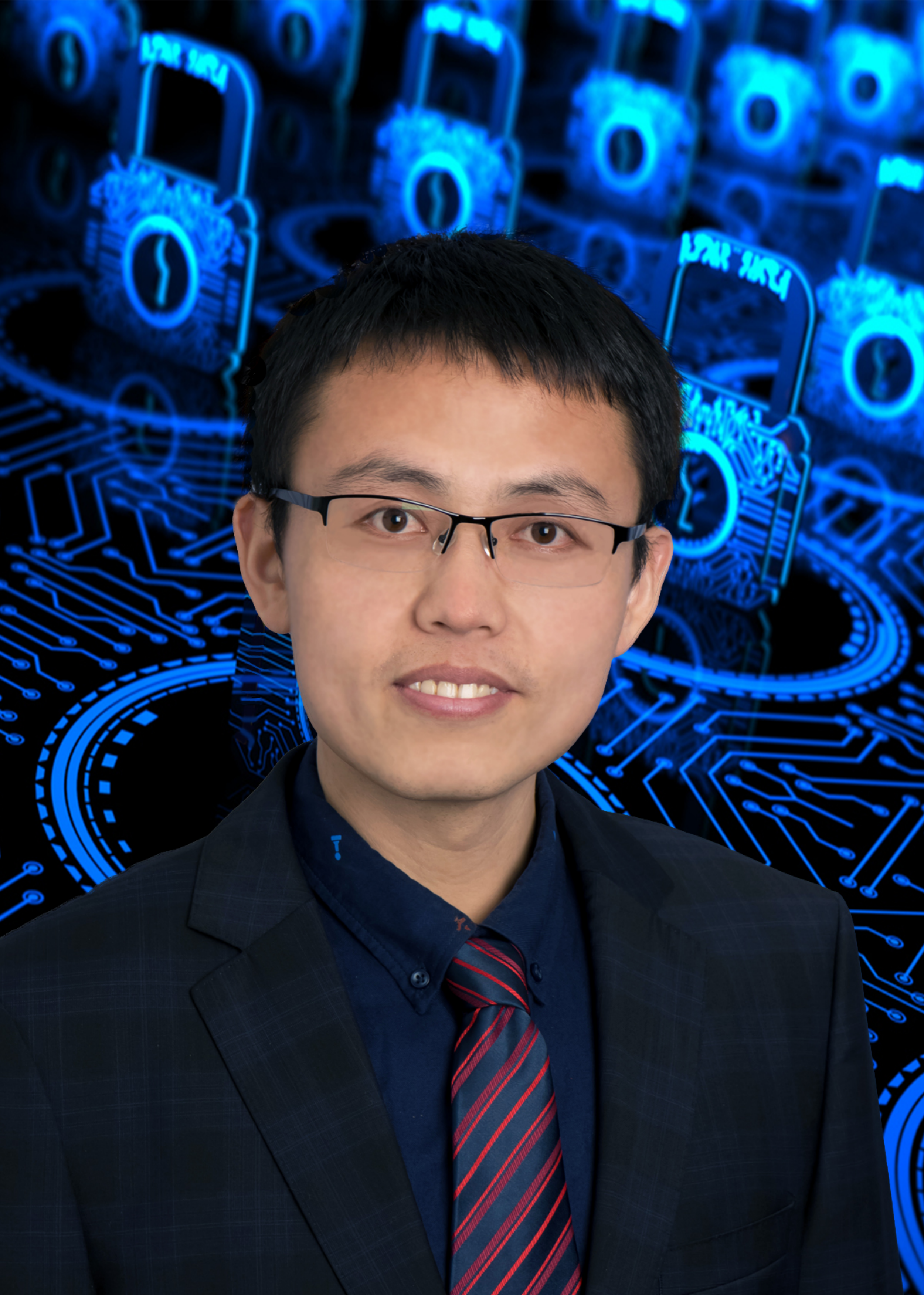}}]{Yushu Zhang}
(Senior Member, IEEE) received the B.S. degree from the School of Science, North University of China, Taiyuan, China, in 2010, and the Ph.D. degree from the College of Computer Science, Chongqing University, Chongqing, China, in 2014. He held various research positions with City University of Hong Kong, Southwest University, University of Macau, and Deakin University. He is currently a Professor with the College of Computer Science and Technology, Nanjing University of Aeronautics and Astronautics, Nanjing, China. He is an Associate Editor of Information Sciences, Journal of King Saud University-Computer and Information Sciences, and Signal Processing. His research interests include multimedia security, blockchain, and artificial intelligence. He has co-authored more than 200 refereed journal articles and conference papers in these areas.
\end{IEEEbiography}

\begin{IEEEbiography}[{\includegraphics[width=1in,height=1.25in,clip,keepaspectratio]{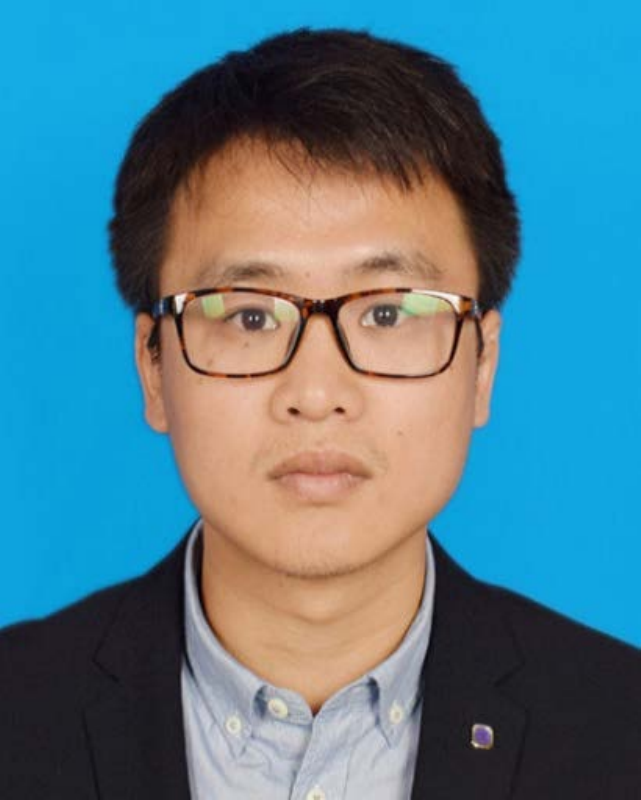}}]{Leo Yu Zhang}
(Member, IEEE) received the Ph.D. degree from the City University of Hong Kong, Hong Kong, China, in 2016. He is currently a Senior Lecturer with the School of Information and Communication Technology, Griffith University, QLD, Australia. He used to be a faculty member at the School of Information Technology, Deakin University, from 2018 to 2023. He held various research positions with the City University of Hong Kong, the University of Macau, the University of Ferrara, and the University of Bologna. He is an Associate Editor of IEEE Transactions on Dependable and Secure Computing. His current research focuses on trustworthy AI and applied cryptography, and he has published over 100 articles in refereed journals and conferences, such as TIFS, TDSC, Oakland, AsiaCCS, NeurIPS, CVPR, ICCV, IJCAI, AAAI, etc.
\end{IEEEbiography}

\begin{IEEEbiography}[{\includegraphics[width=1in,height=1.25in,clip,keepaspectratio]{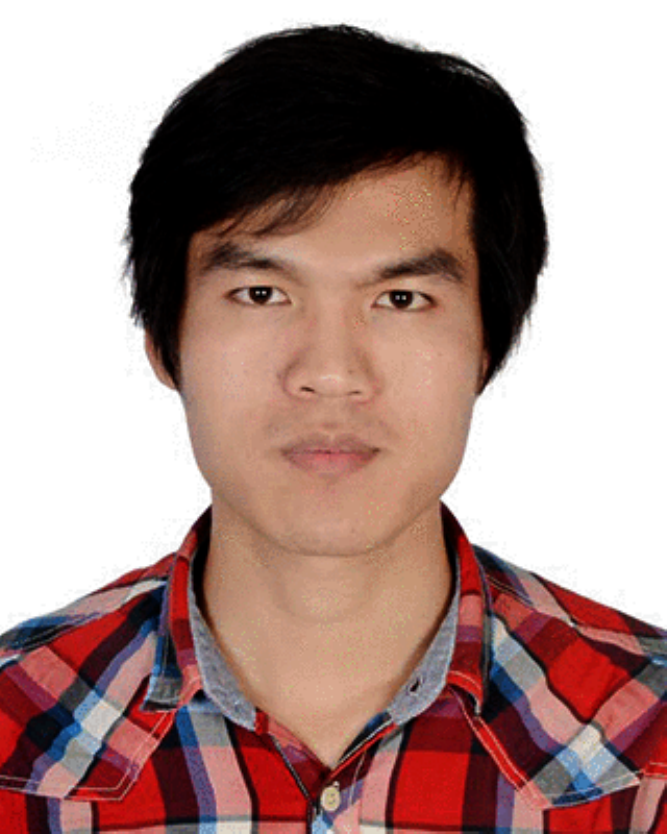}}]{Zhongyun Hua}
(Senior Member, IEEE) received the B.S. degree from Chongqing University, Chong-qing, China, in 2011, followed by the M.S. and Ph.D. degrees from the University of Macau, Macau, China, in 2013 and 2016, respectively. He is currently an Associate Professor with the School of Computer Science and Technology, Harbin Institute of Technology, Shenzhen, China. His works have appeared in prestigious venues such as IEEE Transactions on Dependable and Secure Computing, IEEE Transactions on Image Processing, IEEE Transactions on Signal Processing, and ACM Multimedia. He has been recognized as a ‘Highly Cited Researcher 2022’. His current research interests are focused on chaotic system, multimedia security, and secure cloud computing. He has published about seventy papers on the subject, receiving more than 5800 citations.
\end{IEEEbiography}

\begin{IEEEbiography}[{\includegraphics[width=1in,height=1.25in,clip,keepaspectratio]{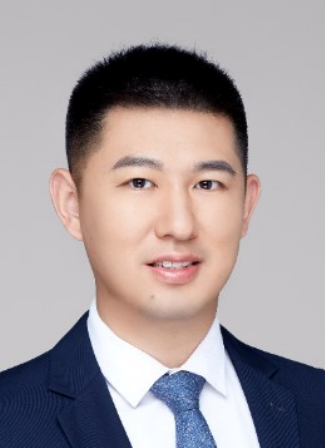}}]{Zhe Liu}
(Senior Member, IEEE) received the B.S. and M.S. degrees from Shandong University, China, in 2008 and 2011, respectively, and the Ph.D. degree from the Laboratory of Algorithmics, Cryptology and Security, University of Luxembourg, Luxembourg, in 2015. He is currently a Professor with the College of Computer Science and Technology, Nanjing University of Aeronautics and Astronautics, Nanjing, China. He has co-authored over 100 peer-reviewed journal articles and conference papers. His research interests include security, privacy, and cryptographic solutions for the Internet of Things. He has served as a program committee member for more than 50 international conferences. He was a recipient of the prestigious FNR Awards-Outstanding Ph.D. Thesis Award in 2016, the ACM CHINA SIGSAC Rising Star Award in 2017, as well as the DAMO Academy Young Fellow in 2019.
\end{IEEEbiography}

\begin{IEEEbiography}[{\includegraphics[width=1in,height=1.25in,clip,keepaspectratio]{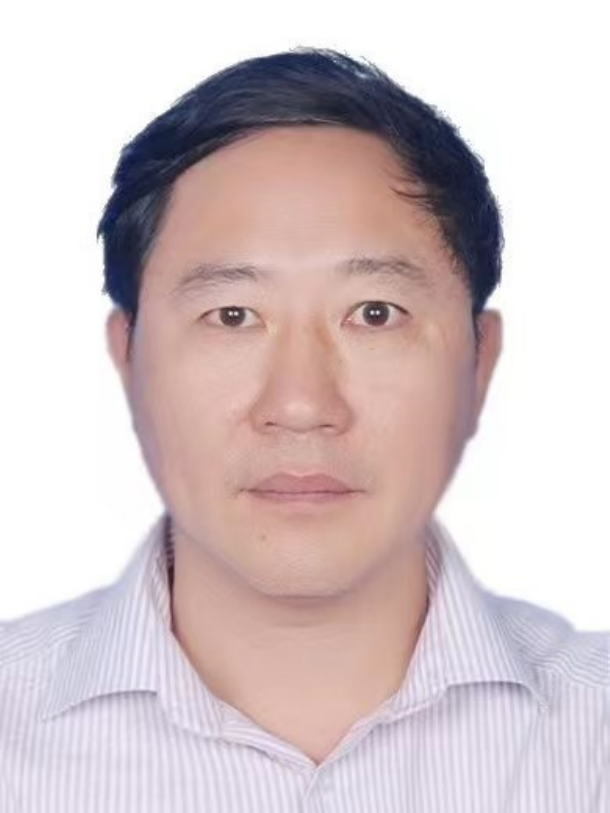}}]{Jiwu Huang}
(Fellow, IEEE) received the B.S. degree from Xidian University, Xi'an, China, the M.S. degree from Tsinghua University, Beijing, China, and the Ph.D. degree from the Institute of Automation, Chinese Academy of Sciences, Beijing, in 1982, 1987, and 1998, respectively. He is currently a Professor with the College of Electronics and Information Engineering, Shenzhen University, Shenzhen, China. His current research interests include multimedia forensics and security. He serves as a member of the IEEE CASS Multimedia Systems and Applications Technical Committee and the IEEE SPS Information Forensics and Security Technical Committee. He is an Associate Editor for the IEEE Transactions on Information Forensics and Security, the LNCS Transactions on Data Hiding and Multimedia Security (Springer), and the EURASIP Journal on Information Security (Hindawi).
\end{IEEEbiography}

\end{document}